\theoremstyle{definition}
\newtheorem{definition}{Definition}
\newtheorem{theorem}{Theorem}
\newtheorem{corollary}{Corollary}
\newtheorem{lemma}{Lemma}
\newcommand{\sech}{\mathrm{sech} \,}
\title{\bfseries On spatial Fourier spectrum\\ of rogue wave breathers}
\author{\normalsize Natanael Karjanto\thanks{\Letter: \texttt{natanael@skku.edu} \href{https://orcid.org/0000-0002-6859-447X}{\includegraphics[scale=0.08]{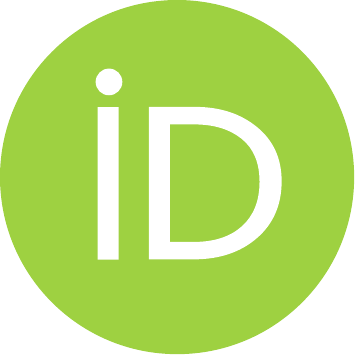}}}}
\affil{Department of Mathematics, University College, Natural Science Campus\\ Sungkyunkwan University, Suwon~16419, Republic of Korea}
\date{\vspace*{-0.5cm} \scriptsize Updated \today}
\begin{document}
\maketitle

% Abstract
\begin{abstract}
\noindent
In this article, we derive exact analytical expressions for the spatial Fourier spectrum of the soliton family on a constant background. Also known as breathers, these solitons are exact solutions of the nonlinear Schr\"odinger equation and are considered as prototypes for rogue wave models. Depending on the periodicity in the spatial-temporal domain, the characteristics in the wavenumber-temporal domain may feature either a continuous or discrete spectrum. \\

\noindent
{\bfseries Keywords:} rogue waves, nonlinear Schr\"odinger equation, breathers, solitons on a constant background, spatial Fourier spectrum.
\end{abstract}

% Section 1 Introduction
\section{Introduction}

The study of rogue waves has been of great interest ever since the discovery of soliton solutions of some nonlinear evolution equations. In particular, the focusing nonlinear Schr\"odinger (NLS) equation and its family of exact analytical solutions of solitons on a non-vanishing background, also known as breathers, has been proposed as prototypes for rogue wave modeling. The literature is not exhaustive in disseminating rogue waves study based on the NLS model and its corresponding breather solutions, for which the phenomena are observed and encounter applications in hydrodynamics, optical fibers, and photonic crystals among others~\cite{onorato2013rogue,dudley2014instabilities}. 

Consider the focusing NLS equation in $(1 + 1)$-dimension written in a canonical form~\cite{sulem1999nonlinear,fibich2015nonlinear}:
\begin{equation}
i q_t + q_{xx} + 2|q|^2 q = 0, \qquad (x,t) \in \mathbb{R}^2, \qquad q(x,t) \in \mathbb{C}.		\label{NLSmodel}
\end{equation}
In surface gravity waves, the variables $(x,t)$ represent spatial and temporal quantities, respectively, while in nonlinear optics, $t$ denotes the  transversal pulse propagation in space and $x$ designates the time variable. Unless otherwise indicated, we adopt the interpretation for water waves for the rest of the article. The complex-valued amplitude $q(x,t)$ describes an envelope of the corresponding wave packet profile $\eta$, where usually given by the relationship $\eta(\bar{x},\bar{t}) = \text{Re} \left\{q(x,t) e^{i(k \bar{x} - \omega \bar{t})} \right\}$, where $x = \varepsilon(\bar{x} - c_g \bar{t})$, $t = \varepsilon^2 \bar{t}$, $\varepsilon \ll 1$, $c_g$ is the group velocity, and the wavenumber $k$ and wave frequency $\omega$ are related by the linear dispersion relationship for the corresponding medium.

The simplest nontrivial solution is space-independent, $q(x,t) = q_0(t) = e^{2it}$, known as the continuous or plane wave solution. The simplest solution with dependence on both variables is $q(x,t) = q_0(t) \, \sech(x)$~\cite{dysthe1999note}. In particular, the NLS equation admits the following family of solitons on a constant background, and this is where our interest lies:
\begin{align}
q_K(x,t) &= q_0(t) \left(\frac{\mu^3 \cos(\rho t) + i \mu \rho \sin (\rho t)}{\rho \cosh(\mu x) - 2\mu \cos(\rho t)} - 1 \right), \qquad \rho = \mu \sqrt{4 + \mu^2} \\
q_P(x,t) &= q_0(t) \left(\frac{4 (1 + 4 i t)}{1 + (4t)^2 + (2x)^2} - 1 \right) \\
q_A(x,t) &= q_0(t) \left(\frac{\kappa^3 \cosh(\sigma t) + i \kappa \sigma \sinh (\sigma t)}{2 \kappa \cosh(\sigma t) - \sigma \cos(\kappa x)} - 1 \right), \qquad \sigma = \kappa \sqrt{4 - \kappa^2}.
\end{align}

In our context, $q_K$, $q_P$, and $q_A$ denote the Kuznetsov-Ma (KM), Peregrine/rational, and Akhmediev breathers, respectively~\cite{kuznetsov1977solitons,ma1979perturbed,peregrine1983water,akhmediev1985generation,akhmediev1986modulation,akhmediev1987exact}. The chosen order is more historical rather than its fame, simplicity, or fame, even though $q_P$ is the limiting case for both $q_K$ and $q_A$. See~\cite{karjanto2020peregrine} for an extensive discussion and visual exploration of the limiting behavior of these NLS breathers. While the discussion of these breathers in the spatio-temporal domain has reached maturity, this article fills a gap in covering their features in the wavenumber-temporal domain by observing their spectra. Note that this should not be confused with spectrum in the context of the inverse scattering transform (IST), but rather the physical spectrum that can be calculated by switching from the space domain into the wavenumber domain by means of the spatial Fourier transform.

Recently, \emph{Frontiers of Physics} has published a sequence of topical research articles describing the latest progress of NLS breathers while commemorating the tenth anniversary of the observation of the Peregrine soliton in nonlinear media, including optical fibers, water waves, and plasma physics~\cite{kibler2010peregrine,chabchoub2011rogue,bailung2011observation}. In particular, the stability properties in energy spaces of these three important NLS breathers have been reviewed in~\cite{alejo2020review}. By characterizing the spectral properties of each of these breathers, the authors demonstrated that they are unstable in the Lyapunov sense. 

We begin with the following well-known definitions on Fourier transforms, Fourier series, and spectral decomposition~\cite{pelinovsky2005spectral}.
\begin{definition}[Spatial Fourier transform]		\label{DefSFT}
Let $f(x,t)$ be a square-integrable function on the spatial real line, then it can be represented in a dual spatial-wavenumber $(x,k)$ space by fixing the time variable $t$ as integral transforms
\begin{equation}
f(x,t) = \frac{1}{2\pi} \int_{-\infty}^{\infty} \widehat{f}(k, t) e^{i k x} \, dk,	\label{q5}
\end{equation}
where $\widehat{f}(k,t)$ is the spatial non-unitary Fourier transform written in the terms of angular wavenumber~$k$, and is defined by
\begin{equation}
\widehat{f}(k,t) = \int_{-\infty}^{\infty} q(x,t) e^{-ik x} \, dx.	\label{qhat6}
\end{equation}
\end{definition}

\begin{definition}[Temporal Fourier transform]		\label{DefTFT}
Alternatively, for a square-integrable function $f(x,t)$ on the temporal real line, it can also be expressed at a fixed location $x$ in space as integral transforms between dual temporal-frequency $(t,\omega)$ domain
\begin{equation}
f(x,t) = \frac{1}{2\pi} \int_{-\infty}^{\infty} \widehat{f}(x,\omega) e^{i \omega t} \, d\omega,	\label{q7}
\end{equation}
where $\widehat{f}(x,\omega)$ is the temporal non-unitary Fourier transform written in the terms of angular frequency~$\omega$, and is defined by
\begin{equation}
\widehat{f}(x,\omega) = \int_{-\infty}^{\infty} f(x,t) e^{-i\omega t} \, dt.	\label{qhat8}
\end{equation}
\end{definition}

In this case, the relationship between~\eqref{q5} and~\eqref{qhat6} is between space and wavenumber domains, where the spatial and wavenumber variables become the transform variables for a fixed time. On the other hand, for~\eqref{q7} and~\eqref{qhat8}, they relate between the time and frequency domains at a fixed location in space by letting time and frequency as the transform variables~\cite{bauck2019note}.

\begin{definition}[Spectral decomposition and Complex spatial Fourier series]
For a spatially periodic function $f(x,t)$ with a period $L \neq 0$, i.e., $f(x,t) = f(x + L, t)$, for all $x \in \mathbb{R}$, the spectral decomposition of $f$ is given as a discrete summation of the harmonic oscillation $e^{i k_n x}$
\begin{equation}
f(x,t) = \sum_{-\infty}^{\infty} \widehat{f}_n(t) e^{i k_n x}		\label{f9}
\end{equation}
where $\widehat{f}_n(t) = \widehat{f}(k_n, t)$ are Fourier coefficients of the complex spatial Fourier series for $f(x,t)$.
\begin{equation}
\widehat{f}_n(t) = \frac{1}{2L} \int_{-L}^{L} f(x,t) e^{-i k_n x} \, dx.		\label{fhat10}
\end{equation}
\end{definition}

\begin{definition}[Complex temporal Fourier series]
For a temporally periodic function $f(x,t)$ with a period $T \neq 0$, i.e., $f(x,t) = f(x, t + T)$, $t > 0$, the spectral decomposition of $f$ is given as a discrete summation of the harmonic oscillation $e^{i \omega_n x}$
\begin{equation}
f(x,t) = \sum_{-\infty}^{\infty} \hat{f}_{\omega_n}(x,\omega_n) e^{i \omega_n t}
\end{equation}
where $\hat{f}_{\omega_n}(x) = \hat{f}(x, \omega_n)$ are Fourier coefficients of the complex temporal Fourier series for $f(x,t)$.
\begin{equation}
\hat{f}_{\omega_n}(x) = \frac{1}{2T} \int_{-T}^{T} f(x,t) e^{-i \omega_n t} \, dt.
\end{equation}
\end{definition}

Throughout the rest of this article, unless otherwise mentioned, we focus our discussion on the spatial Fourier transform and (complex) Fourier series for the family of breather soliton solutions of the NLS equation~\eqref{NLSmodel}, i.e.,~\eqref{q5},~\eqref{qhat6},~\eqref{f9}, and~\eqref{fhat10}. Before moving on to the next section, we end this introduction by stating the spatial Fourier transform of a unitary constant function as the following lemma.
\begin{lemma}		\label{LemDiracDelta}
The Fourier transform of the constant function $f(x) = 1$ is given by the Dirac delta function $\delta(k)$, generally in the sense of distributions rather than measures, i.e.,
\begin{equation*}
\widehat{f}(k) = \int_{-\infty}^{\infty} 1 \, e^{-i k x} \, dx = 2\pi \delta(k).
\end{equation*} 
\end{lemma}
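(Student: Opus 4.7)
The plan is to interpret the stated identity in the sense of tempered distributions, since the constant function $f(x) = 1$ is not Lebesgue integrable on $\mathbb{R}$ and the defining integral in~\eqref{qhat6} fails to converge classically. Consequently, one should read $\widehat{f}(k)$ as an element of $\mathcal{S}'(\mathbb{R})$, and the equality $\widehat{f}(k) = 2\pi \delta(k)$ as an equality of distributions acting on Schwartz test functions. The lemma statement itself already hints at this by noting that the identity holds ``in the sense of distributions rather than measures.''

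First, I would select a convenient regularizing family that approximates the constant function. A natural choice is the Gaussian family $f_\epsilon(x) = e^{-\epsilon x^2}$ with $\epsilon > 0$, since $f_\epsilon \to 1$ pointwise and in $\mathcal{S}'(\mathbb{R})$ as $\epsilon \to 0^+$, and each $f_\epsilon$ is a genuine Schwartz function whose Fourier transform may be computed classically. Completing the square in the exponent of the integrand yields the standard evaluation
\begin{equation*}
\widehat{f_\epsilon}(k) = \int_{-\infty}^{\infty} e^{-\epsilon x^2} e^{-ikx} \, dx = \sqrt{\frac{\pi}{\epsilon}} \, e^{-k^2/(4\epsilon)}.
\end{equation*}

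Next, I would show that this family converges to $2\pi \delta(k)$ in $\mathcal{S}'(\mathbb{R})$ as $\epsilon \to 0^+$. Rewriting $\sqrt{\pi/\epsilon} \, e^{-k^2/(4\epsilon)} = 2\pi \cdot \frac{1}{\sqrt{4\pi\epsilon}} \, e^{-k^2/(4\epsilon)}$, one recognizes the second factor as a Gaussian mollifier of unit mass that concentrates at the origin. The standard approximate-identity argument---pairing against an arbitrary $\phi \in \mathcal{S}(\mathbb{R})$, rescaling $k = \sqrt{\epsilon}\, u$, and invoking the continuity of $\phi$ together with dominated convergence---delivers the limit $\langle \widehat{f_\epsilon}, \phi \rangle \to 2\pi\, \phi(0) = \langle 2\pi\delta, \phi \rangle$. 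The assertion then follows from the continuity of the Fourier transform on $\mathcal{S}'(\mathbb{R})$ applied to the convergence $f_\epsilon \to 1$.

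The main obstacle, if one can call it that, is purely interpretational: the integral has no classical meaning, so the proof must carefully justify the distributional limit and the interchange of limit with Fourier transform. Once the setting of tempered distributions is adopted, each ingredient---the Gaussian integral, the mollifier limit, and the continuity of the Fourier transform on $\mathcal{S}'$---is standard, and the result follows without further complication. An alternative, even quicker route would be to apply Fourier inversion to the identity $\widehat{\delta}(k) = 1$ and use parity of $\delta$ to relabel the dummy variable, but the Gaussian-regularization approach has the pedagogical merit of making the distributional convergence explicit.
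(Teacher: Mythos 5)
Your proposal is correct. Note, however, that the paper itself states Lemma~\ref{LemDiracDelta} without any proof---it is presented as a well-known fact at the end of the introduction---so there is no in-paper argument to compare against; your write-up supplies a justification the paper omits. Your Gaussian-regularization route is the standard one and all the ingredients check out: the classical evaluation $\int_{-\infty}^{\infty} e^{-\epsilon x^2} e^{-ikx}\,dx = \sqrt{\pi/\epsilon}\,e^{-k^2/(4\epsilon)}$ is right, the normalization $\sqrt{\pi/\epsilon}\,e^{-k^2/(4\epsilon)} = 2\pi\cdot\tfrac{1}{\sqrt{4\pi\epsilon}}e^{-k^2/(4\epsilon)}$ is arithmetically correct and exhibits a unit-mass mollifier, and the passage to the limit in $\mathcal{S}'(\mathbb{R})$ combined with the continuity of the Fourier transform on tempered distributions closes the argument. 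The alternative you mention---dualizing $\widehat{\delta} = 1$ via Fourier inversion---is indeed shorter and is arguably closer in spirit to how the paper treats the identity (as a black-box distributional fact), but your explicit mollifier computation is the more self-contained and instructive of the two.
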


The article is organized as follows. Section~\ref{spafoutra} covers the spatial Fourier transform for the breather solitons. It covers the three breathers mentioned earlier and provides analytical expressions for the corresponding spectra. We also illustrate the spectra by plotting them.  Section~\ref{conclude} concludes our discussion.

% Section 2 Spatial Fourier spectrum
\section{Spatial Fourier spectrum}		\label{spafoutra}

In this section, we derive the spatial Fourier spectrum for the breather solitons of the NLS equation. For the KM breather and Peregrine soliton, the spectra are continuous in wavenumber~$k$ while for the Akhmediev breather, the spectrum is discrete due to its spatial periodicity. 

% Subsection 2.1
\subsection{Kuznetsov-Ma breather}
Before proving the theorem for the spectrum of the KM breather, we need to verify the following lemma.
\begin{lemma}		\label{LemGradRyz}
For $k \neq 0$, $\mu > 0$, and $b > |c| > 0$, we have the following definite integral (Formula {\bfseries 3.983.1(b)} in~\cite{gradshteyn2015table}):
\begin{equation*}
\int_{-\infty}^{\infty} \frac{\cos (k x) \, dx}{b \cosh (\mu x) + c} = 2 \frac{\pi}{\mu} \frac{\sinh \left(\frac{k}{\mu} \arccos \frac{c}{b}\right)}{\sqrt{b^2 - c^2} \sinh \frac{k}{\mu} \pi}.
\end{equation*}
\end{lemma}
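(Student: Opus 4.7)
The plan is contour integration in a horizontal strip. First I would rescale via $u=\mu x$ and set $\alpha=k/\mu$, so that the claim reduces to evaluating
\[
J(\alpha) := \int_{-\infty}^{\infty} \frac{e^{i\alpha u}\,du}{b\cosh u + c},
\]
for $\alpha>0$ (the case $\alpha<0$ then follows because the original cosine integral is even in $k$, and $\sinh$ is odd). Since $|c/b|<1$, I would write $-c/b=\cos\theta$ with $\theta=\arccos(-c/b)\in(0,\pi)$; the simple zeros of $b\cosh u+c$ in the strip $0<\operatorname{Im} u<2\pi$ are exactly $u_1=i\theta$ and $u_2=i(2\pi-\theta)$.

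Next, I would apply the residue theorem to the rectangular contour with vertices $\pm R,\ \pm R+2\pi i$. The periodicity $\cosh(u+2\pi i)=\cosh u$ turns the top edge (traversed leftward) into $-e^{-2\pi\alpha}$ times the bottom integral, and the vertical segments vanish as $R\to\infty$ since the denominator grows like $e^R/2$ while $|e^{i\alpha u}|\le 1$ on the strip. One therefore obtains
\[
(1-e^{-2\pi\alpha})\,J(\alpha) \;=\; 2\pi i\bigl(\operatorname{Res}_{u_1}+\operatorname{Res}_{u_2}\bigr).
\]

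Then I would compute the residues from $\operatorname{Res}_{u_j}=e^{i\alpha u_j}/(b\sinh u_j)$, obtaining $e^{-\alpha\theta}/(ib\sin\theta)$ and $-e^{-\alpha(2\pi-\theta)}/(ib\sin\theta)$. Factoring $e^{-\pi\alpha}$ collapses the numerator into $2\sinh(\alpha(\pi-\theta))$ and converts $1-e^{-2\pi\alpha}$ into $2e^{-\pi\alpha}\sinh(\pi\alpha)$. The identity $\pi-\theta=\arccos(c/b)$ (complementary arccosines) and $b\sin\theta=\sqrt{b^2-c^2}$ then deliver
\[
J(\alpha) \;=\; \frac{2\pi\,\sinh\!\bigl(\alpha\arccos(c/b)\bigr)}{\sqrt{b^2-c^2}\,\sinh(\pi\alpha)}.
\]
Dividing by $\mu$ and substituting $\alpha=k/\mu$ yields the stated formula; taking real parts is unnecessary because $J(\alpha)$ is already real for real $\alpha$.

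The analytical ingredients are entirely standard; the main obstacle is purely bookkeeping: correctly locating the two poles inside the chosen rectangle (not four, as a careless count via $\cosh u=\cos\theta$ might suggest), tracking the orientation and the $e^{-2\pi\alpha}$ phase acquired on the top edge, and compressing the ensuing exponential combinations into the advertised $\sinh/\sinh$ ratio via the complementary-arccosine identity.
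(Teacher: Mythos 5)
Your proposal is correct and follows essentially the same route as the paper: a rectangular contour of height $2\pi$ exploiting the periodicity of $\cosh$, the residue theorem applied to the two simple poles on the imaginary axis inside the rectangle, vanishing vertical edges, and the collapse of the exponentials into the $\sinh/\sinh$ ratio via $b\sin\theta=\sqrt{b^2-c^2}$. The only cosmetic differences are that you rescale $u=\mu x$ first and parametrize the poles by $\arccos(-c/b)$ rather than $\arccos(c/b)$, and you correctly observe that taking real parts at the end is unnecessary since the integral is already real.
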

\begin{proof}
We consider the complex-valued function 
\begin{equation*}
f(z) = \frac{e^{i k z}}{b \cosh(\mu z) + c}
\end{equation*}
and the rectangular contour $\gamma_{\text{\tiny $K$}}$ with corners at $\pm R$ and $\pm R + 2\pi i/\mu$, for $R > 0$. See Figure~\ref{rectcont}. To find the poles and residues of $f$, we need to solve $b \cosh (\mu z) + c = 0$, $z \in \mathbb{C}$. This gives
\begin{align*}
\frac{b}{2} \left(e^{2 \mu z} + 1 \right) + c e^{\mu z} &= 0 \\
e^{\mu z} &= \frac{-c \pm \sqrt{c^2 - b^2}}{b} = -\frac{c}{b} \pm i \sqrt{1 - \frac{c^2}{b^2}} = \cos(\pi \mp \theta) \pm i \sin(\pi \mp \theta) \\
z &= \ln 1 + \frac{i}{\mu} \left(\pi \mp \theta + 2 n \pi \right), \qquad n \in \mathbb{Z}
\end{align*}
where for $0 \leq \theta < \pi/2$, we take
\begin{equation*}
\cos \theta = \frac{c}{b}, \qquad \sin \theta = \sqrt{1 - \frac{c^2}{b^2}}, \qquad \tan \theta = \sqrt{\frac{b^2}{c^2} - 1} > 0.
\end{equation*}

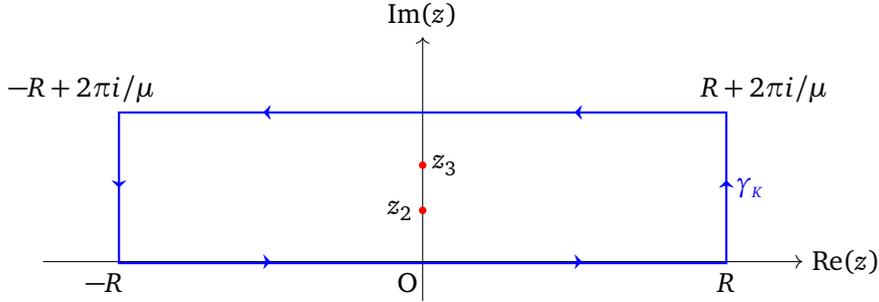
\begin{figure}[htbp]
\begin{center}
\begin{tikzpicture}
\draw[->] (-5,-0.98)--(5,-0.98) node[right]{Re($z$)};
\draw[->] (0,-1.5)--(0,2) node[above]{Im($z$)};
\node[draw,thick,rectangle,minimum width=8cm,minimum height=2cm,color=blue](r){};
\draw[-{Stealth[length=1mm,width=2mm]},color=blue] (4,0)--(4,0.1);
\draw[-{Stealth[length=1mm,width=2mm]},color=blue] (-4,0.1)--(-4,0);
\draw[-{Stealth[length=1mm,width=2mm]},color=blue] (2.1,1)--(2,1);
\draw[-{Stealth[length=1mm,width=2mm]},color=blue] (-2,1)--(-2.1,1);
\draw[-{Stealth[length=1mm,width=2mm]},color=blue] (-2.1,-0.98)--(-2,-0.98);
\draw[-{Stealth[length=1mm,width=2mm]},color=blue] (2,-0.98)--(2.1,-0.98);
\node[below] at (-0.2,-1) {O};
\node[below] at (4,-1) {$R$};
\node[below] at (-4.2,-1) {$-R$};
\node[above] at (4.5,1) {$R + 2\pi i/\mu$};
\node[above] at (-4.5,1) {$-R + 2\pi i/\mu$};
\node[circle,fill,inner sep=1pt,color=red] at (0,0.3){};
\node[right] at (0,0.3) {$z_3$}; %{$i(\pi + \theta)/\mu$};
\node[circle,fill,inner sep=1pt,color=red] at (0,-0.3){};
\node[left] at (0,-0.3) {$z_2$}; %{$i(\pi - \theta)/\mu$};
\node[right,color=blue] at (4,0) {$\gamma_{\text{\tiny $K$}}$};
\end{tikzpicture}
\end{center}
\caption{A rectangular contour $\gamma_{\text{\tiny $K$}}$ with corners $\pm R$ and $\pm R + 2\pi i/\mu$ determined by a closed, simple, piecewise smooth curve traversed in positive orientation. The function $f$ is holomorphic in an open connected domain containing the interior of $\gamma_{\text{\tiny $K$}}$ and its closure, except at the poles $z = z_2$ and $z = z_3$.}		\label{rectcont} %The poles of $f$ are indicated by $z_2$ and $z_3$.}
\end{figure}
Inside the rectangular contour, the function has two simple poles at the following points:
\begin{align*}
z_2 = \frac{i}{\mu} (\pi - \theta), \qquad \qquad \text{and} \qquad \qquad
z_3 = \frac{i}{\mu} (\pi + \theta)
\end{align*}
that lie on the imaginary axis. The corresponding residue for each of these poles is given as follows:
\begin{align*}
\text{res}(f,z_2) &= \frac{e^{-\frac{k}{\mu} (\pi - \theta)}}{b \mu \sinh i (\pi - \theta)} 
= \frac{e^{-\frac{k}{\mu} (\pi - \theta)}}{i b \mu \sin (\pi - \theta)}
= \frac{e^{-\frac{k}{\mu} (\pi - \theta)}}{i \mu \sqrt{b^2 - c^2}} \\
\text{res}(f,z_3) &= \frac{e^{-\frac{k}{\mu} (\pi + \theta)}}{b \mu \sinh i (\pi + \theta)}  
= \frac{ e^{-\frac{k}{\mu} (\pi + \theta)}}{i b \mu \sin (\pi + \theta)} 
= \frac{-e^{-\frac{k}{\mu} (\pi + \theta)}}{i \mu \sqrt{b^2 - c^2}}.
\end{align*}
Using the Residue Theorem, we know that
\begin{align}
\int_{\gamma_{\text{\tiny $K$}}} f(z) \, dz &= 2 \pi i \sum_{n = 2}^{3} \text{res}(f,z_n)
=\frac{2\pi}{\mu \sqrt{b^2 - c^2}} \left[e^{-\frac{k}{\mu} (\pi - \theta)} - e^{-\frac{k}{\mu} (\pi + \theta)} \right] \nonumber \\
&= \frac{2\pi e^{-\frac{k}{\mu} \pi}}{\mu \sqrt{b^2 - c^2}} \left[e^{\frac{k}{\mu} \theta} - e^{-\frac{k}{\mu} \theta} \right]
= \frac{2\pi \cdot \left(2 e^{-\frac{k}{\mu} \pi} \right)}{\mu \sqrt{b^2 - c^2}} \sinh \left(\frac{k}{\mu} \theta \right). \label{intresi}
\end{align}
On the other hand, expressing the integral on the left-hand side above as a combination of line integrals yields
\begin{align}
\int_{\gamma_{\text{\tiny $K$}}} f(z) \, dz &= 
\int_{-R}^{R} \frac{e^{i k x} \, dx}{b \cosh (\mu x) + c} + 
\int_{0}^{2\pi} \frac{e^{i k (R + iy)} \, i \, dy}{b \cosh \mu (R + iy) + c} \nonumber \\ & \quad - 
\int_{-R}^{R} \frac{e^{i k (x + 2\pi i/\mu)} \, dx}{b \cosh (\mu x + 2\pi i) + c} - 
\int_{0}^{2\pi} \frac{e^{i k (-R + iy)} \, i \, dy}{b \cosh \mu (-R + iy) + c} \nonumber \\
&= \left(1 - e^{-2\frac{k}{\mu} \pi} \right) \int_{-R}^{R} \frac{e^{i k x} \, dx}{b \cosh (\mu x) + c} \nonumber \\ & \quad +
i \int_{0}^{2\pi} \frac{e^{ i k R} e^{-\omega y} \, dy}{b \cosh \mu (R + iy) + c} -  
i \int_{0}^{2\pi} \frac{e^{-i k R} e^{-\omega y} \, dy}{b \cosh \mu (R - iy) + c} \label{intline}
\end{align}
since $\cosh(\mu x + 2\pi i) = \cosh (\mu x) \cos (2\pi) + i \sinh(\mu x) \sin (2\pi) = \cosh (\mu x)$. Furthermore,
$\left| \cosh \mu (\pm R + iy) \right|^2 = \cosh^2(\mu R) - \sin^2 (\mu y) \geq \cosh^2 (\mu R) - 1 \geq \cosh^2 (\mu R)$, $|b \cosh \mu (\pm R + iy)| \geq b \cosh (\mu R)$, and for sufficiently large $R$, $|b \cosh \mu (\pm R + iy) + c| \geq b \cosh (\mu R) - c$. Hence
\begin{align*}
\left|i \int_{0}^{2\pi} \frac{e^{ i k R} e^{-k y} \, dy}{b \cosh \mu (R + iy) + c} \right| &\leq 
\int_{0}^{2\pi} \left|\frac{e^{ i k R} e^{- k y}}{b \cosh \mu (R + iy) + c} \right| \, dy \leq 
\int_{0}^{2\pi} \frac{e^{- k y}}{b \cosh \mu R - c} \, dy \\
&\leq \frac{1 - e^{-2 k \pi}}{k (b \cosh \mu R -c )}
\end{align*}
which tends to $0$ as $R \to \infty$. Similarly, for sufficiently large $R$
\begin{align*}
\left|i \int_{0}^{2\pi} \frac{e^{-i k R} e^{-k y} \, dy}{b \cosh \mu (-R + iy) + c} \right| &\leq 
\int_{0}^{2\pi} \left|\frac{e^{-i k R} e^{- k y}}{b \cosh \mu (-R + iy) + c} \right| \, dy \leq 
\int_{0}^{2\pi} \frac{e^{- k y}}{b \cosh \mu R - c} \, dy \\
& \leq \frac{1 - e^{-2 k \pi}}{k (b \cosh \mu R -c )}
\end{align*}
which again tends to $0$ as $R \to \infty$. Hence, by letting $R \to \infty$, we observe from~\eqref{intline} and~\eqref{intresi} that
\begin{equation*}
\left(1 - e^{-2\frac{k}{\mu} \pi} \right) \int_{-\infty}^{\infty} \frac{e^{i k x} \, dx}{b \cosh (\mu x) + c} = \frac{2\pi \cdot \left(2 e^{-\frac{k}{\mu} \pi} \right)}{\mu \sqrt{b^2 - c^2}} \sinh \left(\frac{k}{\mu} \theta \right). 
\end{equation*}
By taking the real-part of the integral on the left-hand side and pulling out the factor $2 e^{-\frac{k}{\mu} \pi}$ from the product in front of it, we obtain
\begin{align*}
\int_{-\infty}^{\infty} \frac{\cos \left(k x \right) \, dx}{b \cosh (\mu x) + c} 
&= \frac{2\pi}{\mu \sqrt{b^2 - c^2}} \frac{\sinh \left(\frac{k}{\mu} \theta \right)}{\sinh \left(\frac{k}{\mu} \pi \right)}
 = \frac{2\pi}{\mu} \frac{\sinh \left[\frac{k}{\mu} \arccos \left(\frac{c}{b} \right) \right]}{\sqrt{b^2 - c^2} \sinh \left(\frac{k}{\mu} \pi \right)}.
\end{align*}
This completes the proof.
\end{proof}
\begin{figure}[htbp]
\begin{center}
\includegraphics[width=0.45\textwidth]{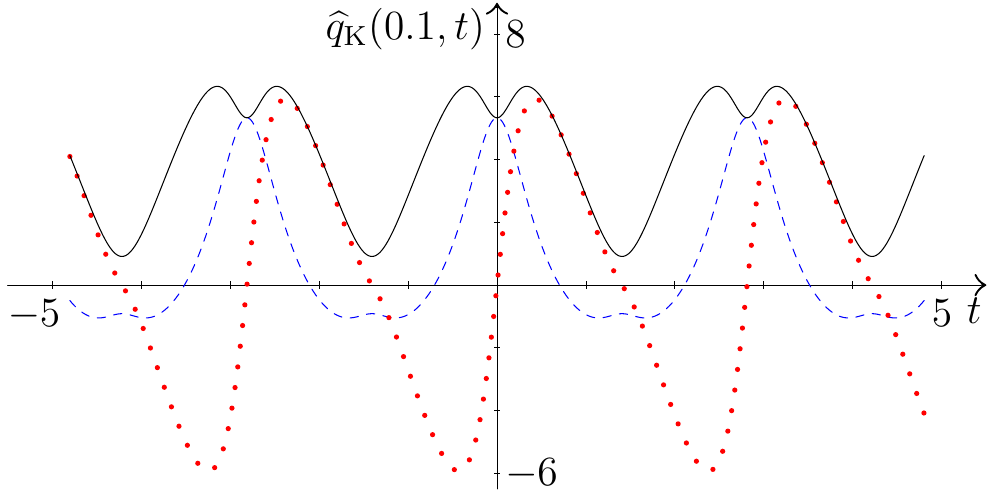} 	\hspace*{1cm}
\includegraphics[width=0.45\textwidth]{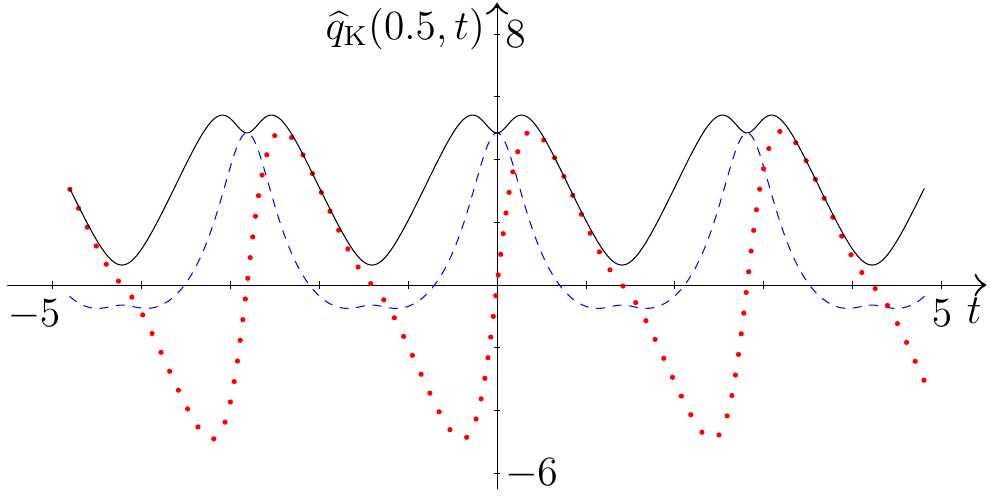} \\ 	\vspace*{0.5cm} 
\includegraphics[width=0.45\textwidth]{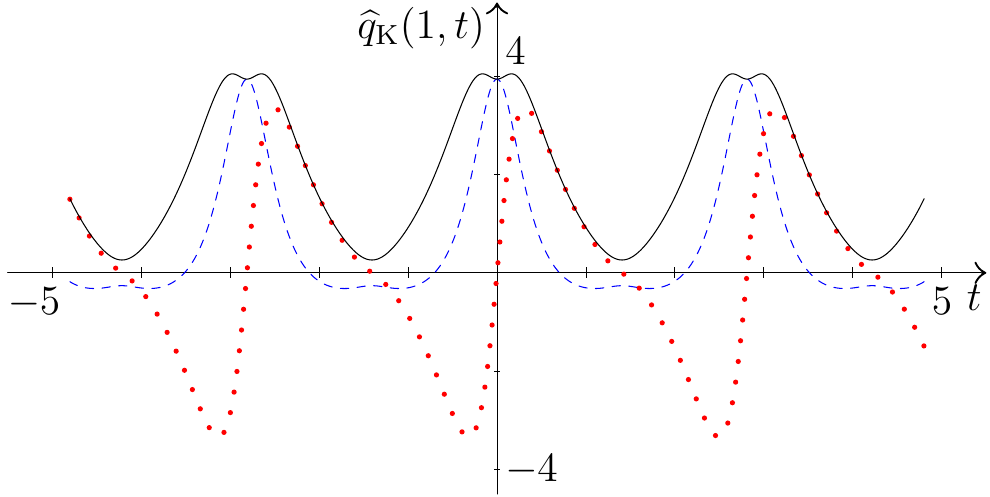} 		\hspace*{1cm}
\includegraphics[width=0.45\textwidth]{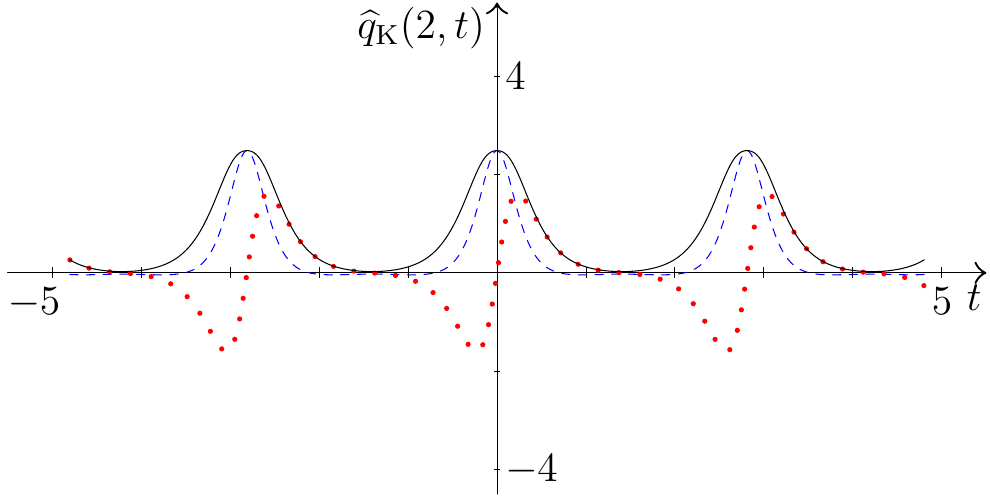}
\end{center}
\caption{The spatial Fourier spectrum evolution for the KM breather $\widehat{q}_K(k_0,t)$ as a function of time for $\mu = 1$, $k_0 = 0.1$ (top left), $k_0 = 0.5$ (top right), $k_0 = 1$ (bottom left), and $k_0 = 2$ (bottom right). The real part (dashed blue), the imaginary part (dotted red), and the modulus (solid black) are depicted separately.}			\label{fou-kmb-wanu-in-time}
\end{figure}

\begin{theorem}
The spatial Fourier spectrum of the KM breather is given by
\begin{align*}
\widehat{q}_K(k,t) &= \left\{
\begin{array}{ll}
2 \pi e^{2it} \left\{ \frac{\left[\mu^3 \cos (\rho t) + i \mu \rho \sin (\rho t) \right])}{\mu^2 \sqrt{\mu^2 + \sin^2(\rho t)}} \cdot \frac{\sinh \left[\frac{k}{\mu} \arccos \left(-2\frac{\mu}{\rho} \cos (\rho t) \right)\right]}{\sinh \left(\frac{k}{\mu} \pi \right)} - \delta (k) \right\}, & \text{for} \; k \neq 0 \\
2 e^{2it} \left\{ \frac{\left[\mu^3 \cos (\rho t) + i \mu \rho \sin (\rho t) \right]}{\mu^2 \sqrt{\mu^2 + \sin^2(\rho t)}} \cdot \arccos \left(-2\frac{\mu }{\rho} \cos(\rho t) \right) - \pi \delta (0) \right\}, \qquad & \text{for} \; k = 0.
\end{array}
\right.
\end{align*}
\end{theorem}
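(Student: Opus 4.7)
The plan is to peel off the constant background and apply the two lemmas already proved. Writing
\[
q_K(x,t) = e^{2it}\,\frac{\mu^{3}\cos(\rho t)+i\mu\rho\sin(\rho t)}{\rho\cosh(\mu x) - 2\mu\cos(\rho t)} \;-\; e^{2it},
\]
I would use linearity of the Fourier transform. Lemma~\ref{LemDiracDelta} handles the constant tail: its contribution to $\widehat q_K(k,t)$ is $-2\pi e^{2it}\delta(k)$. Everything then reduces to evaluating the scalar integral
\[
I(k,t) = \int_{-\infty}^{\infty}\frac{e^{-ikx}\,dx}{\rho\cosh(\mu x) - 2\mu\cos(\rho t)},
\]
since the time-dependent numerator pulls outside the $x$-integral.

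Next I would observe that the denominator is even in $x$, so the imaginary part $-i\sin(kx)/[\rho\cosh(\mu x)-2\mu\cos(\rho t)]$ integrates to zero by odd symmetry, leaving $I(k,t) = \int_{-\infty}^{\infty}\cos(kx)\,dx/[\rho\cosh(\mu x)-2\mu\cos(\rho t)]$. This is exactly the shape to which Lemma~\ref{LemGradRyz} applies, with $b=\rho$ and $c=-2\mu\cos(\rho t)$. Before invoking the lemma I would verify the hypothesis $b>|c|>0$: since $\rho = \mu\sqrt{4+\mu^{2}} > 2\mu \ge 2\mu|\cos(\rho t)|$, strict inequality holds for all $t$ with $|\cos(\rho t)|<1$, and the remaining isolated times are covered by continuity. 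A short algebraic simplification then produces
\[
\sqrt{b^{2}-c^{2}} = \sqrt{\rho^{2}-4\mu^{2}\cos^{2}(\rho t)} = \mu\sqrt{\mu^{2}+4\sin^{2}(\rho t)},
\]
after which Lemma~\ref{LemGradRyz} gives $I(k,t)$ in closed form for $k\neq 0$.

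Multiplying $I(k,t)$ by $e^{2it}[\mu^{3}\cos(\rho t)+i\mu\rho\sin(\rho t)]$ and subtracting $2\pi e^{2it}\delta(k)$ assembles the stated formula for $k\neq 0$. For the degenerate case $k=0$ the factor $\sinh(k\pi/\mu)$ in the denominator vanishes, but so does $\sinh[(k/\mu)\arccos(-2\mu\cos(\rho t)/\rho)]$ in the numerator, so I would pass to the limit using the elementary fact $\sinh(\alpha k)/\sinh(\beta k)\to \alpha/\beta$ as $k\to 0$. This replaces the hyperbolic quotient by $\arccos(-2\mu\cos(\rho t)/\rho)/\pi$, and pulling a factor of $\pi$ out of the $2\pi$ prefactor yields the second branch of the theorem, with the residual $-\pi\delta(0)$ coming from the background tail interpreted in the distributional sense.

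The computation is essentially bookkeeping once the two lemmas are in place; the only genuine subtlety is justifying the application of Lemma~\ref{LemGradRyz} when $c=-2\mu\cos(\rho t)$ momentarily touches $0$ or when $|c|$ approaches $b$, both of which I would dispatch by a density/continuity argument. A minor expository point is that the $k=0$ expression involves $\delta(0)$ rather than a genuine function value, so I would emphasize that $\widehat q_K(0,t)$ is to be read as the appropriate distributional evaluation obtained by taking $k\to 0$ inside the distribution, consistent with Lemma~\ref{LemDiracDelta}.
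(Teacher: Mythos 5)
Your proposal follows essentially the same route as the paper: split off the constant background via Lemma~\ref{LemDiracDelta}, reduce the remaining integral to the cosine form by evenness, apply Lemma~\ref{LemGradRyz} with $b=\rho$ and $c=-2\mu\cos(\rho t)$, and obtain the $k=0$ branch by the limit $\sinh(\alpha k)/\sinh(\beta k)\to\alpha/\beta$; your explicit verification of the hypothesis $b>|c|>0$ and the continuity argument at the exceptional times where $\cos(\rho t)=0$ are welcome additions that the paper omits. One substantive remark: your algebra gives $\sqrt{b^{2}-c^{2}}=\mu\sqrt{\mu^{2}+4\sin^{2}(\rho t)}$, which is the correct value since $\rho^{2}-4\mu^{2}\cos^{2}(\rho t)=\mu^{2}\bigl[\mu^{2}+4\sin^{2}(\rho t)\bigr]$, whereas the paper (both in its proof and in the theorem statement) writes $\mu\sqrt{\mu^{2}+\sin^{2}(\rho t)}$, apparently dropping the factor of~$4$.
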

\begin{proof}
Using Definition~\ref{DefSFT} and the fact that $q_K(x,t)$ is an even function with respect to $x$, the spatial Fourier transform of the KM breather $q_K$ can be written as follows:
\begin{align*}
\widehat{q}_K(k,t) &= \int_{-\infty}^{\infty}  e^{2it} \left(\frac{\mu^3 \cos(\rho t) + i \mu \rho \sin (\rho t)}{\rho \cosh(\mu x) - 2\mu \cos(\rho t)} \right) e^{-i k x} \, dx - \int_{-\infty}^{\infty} 1 \,  e^{-i k x} \, dx \\
&= e^{2it} \left[ \mu^3 \cos(\rho t) + i \mu \rho \sin (\rho t) \right] \int_{-\infty}^{\infty}   \left(\frac{\cos (k x)}{\rho \cosh(\mu x) - 2\mu \cos(\rho t)} \right) \, dx - \int_{-\infty}^{\infty}  e^{-i k x} \, dx .
\end{align*}
Using Lemma~\ref{LemDiracDelta}, taking $b = \rho$ and $c = -2 \mu \cos (\rho t)$ in Lemma~\ref{LemGradRyz}, and since $\sqrt{b^2 - c^2} = \mu \sqrt{\mu^2 + \sin^2 (\rho t)}$ for $\mu > 0$, we obtain
\begin{equation*}
\widehat{q}_K(k,t) = 2 \pi e^{2it} \left\{ \frac{\left[\mu^3 \cos (\rho t) + i \mu \rho \sin (\rho t) \right]}{\mu^2 \sqrt{\mu^2 + \sin^2(\rho t)}} \cdot \frac{\sinh \left[\frac{k}{\mu} \arccos \left(-2\frac{\mu}{\rho} \cos (\rho t) \right)\right]}{\sinh \left(\frac{k}{\mu} \pi \right)} - \delta (k) \right\}, \qquad \text{for} \; k \neq 0.
\end{equation*}
By taking the limit of the expressions in Lemma~\ref{LemGradRyz} as $k \to 0$ yields
\begin{equation*}
\int_{-\infty}^{\infty} \frac{dx}{\rho \cosh(\mu x) - 2\mu \cos (\rho t)} = \frac{2}{\mu^2 \sqrt{\mu^2 + \sin^2 (\rho t)}} \arccos \left(-2\frac{\mu }{\rho} \cos (\rho t)\right).
\end{equation*}
Hence
\begin{equation*}
\widehat{q}_K(0,t) = 2 e^{2it} \left\{ \frac{\left[\mu^3 \cos (\rho t) + i \mu \rho \sin (\rho t) \right]}{\mu^2 \sqrt{\mu^2 + \sin^2(\rho t)}} \cdot \arccos \left(-2\frac{\mu}{\rho} \cos(\rho t) \right) - \pi \delta (0) \right\}.
\end{equation*}
This completes the proof. 
\end{proof}
\begin{figure}[htbp]
\begin{center}
\includegraphics[width=0.45\textwidth]{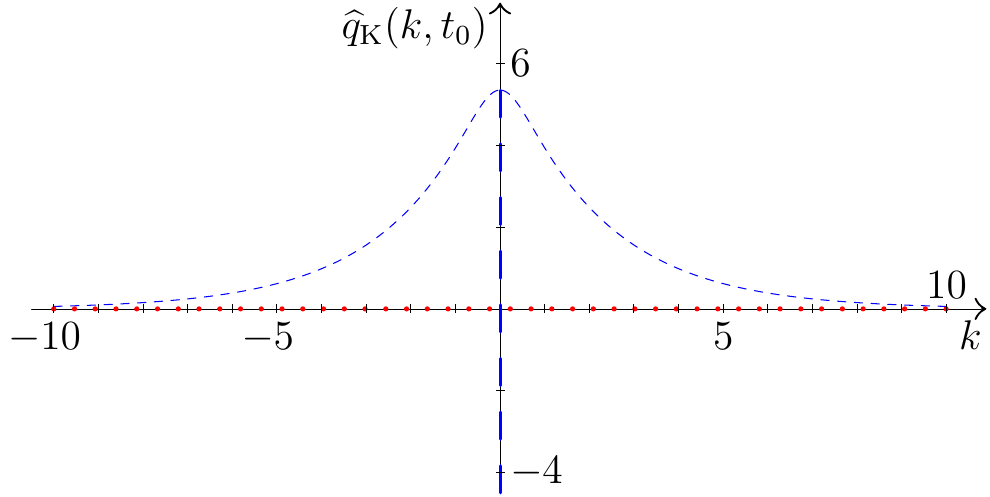} 		\hspace*{1cm}
\includegraphics[width=0.45\textwidth]{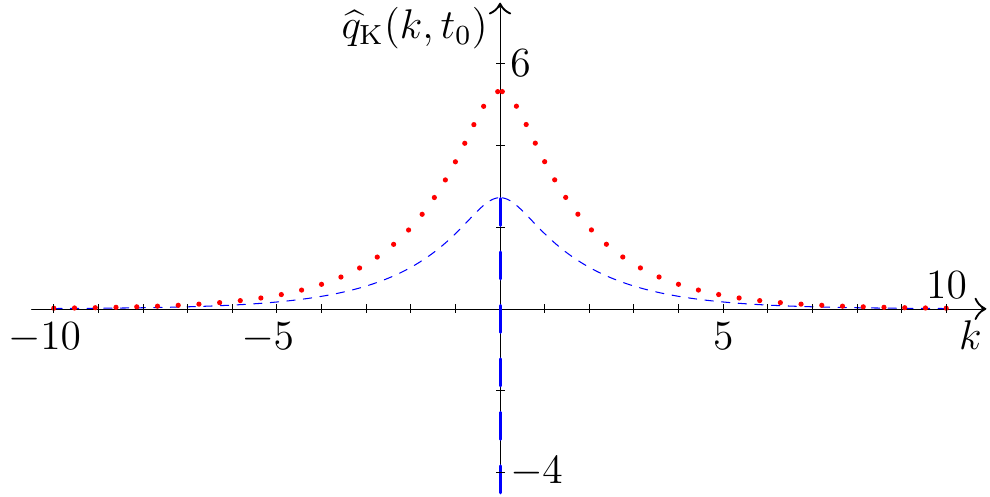} \\ 	\vspace*{0.5cm} 
\includegraphics[width=0.45\textwidth]{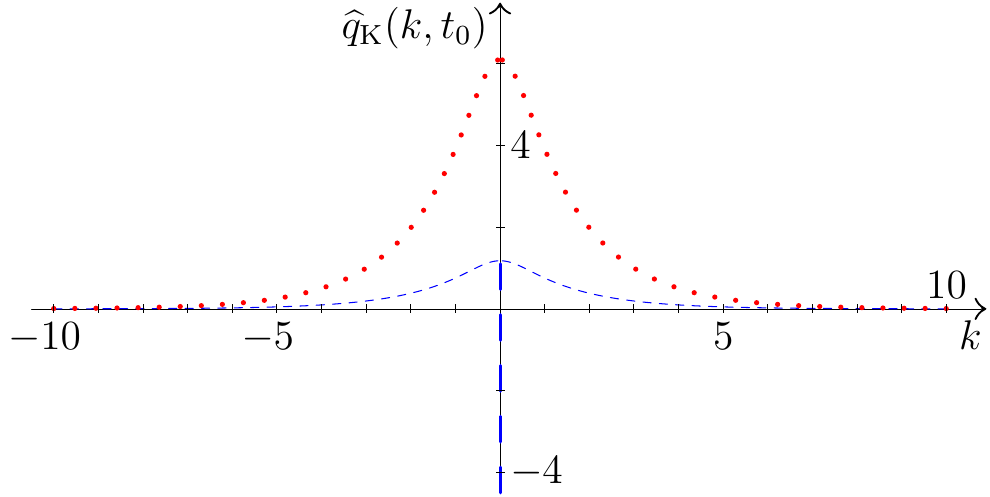} 		\hspace*{1cm}
\includegraphics[width=0.45\textwidth]{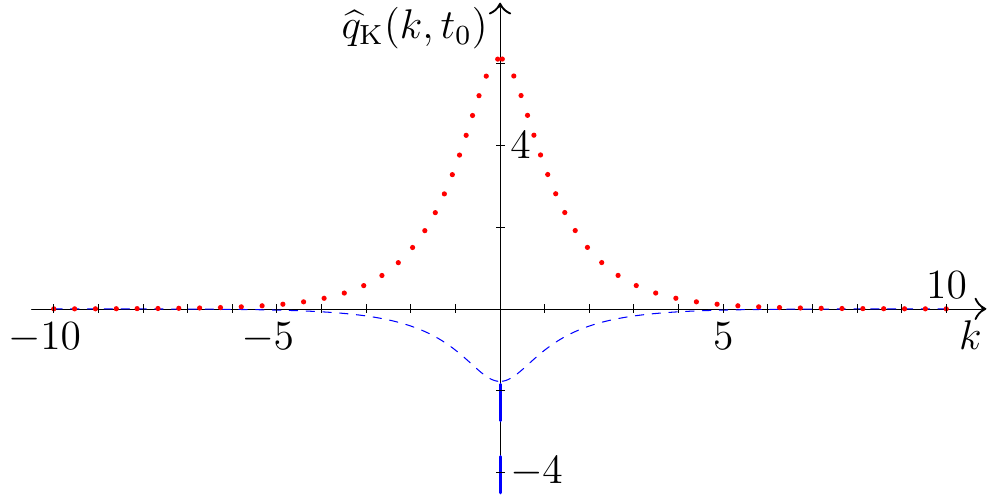}
\end{center}
\caption{The spatial Fourier spectrum of the KM breather $\widehat{q}_K(k,t_0)$ as a function of the wavenumber $k$ for $\mu = 1$ when $t_0 = 0$ (top left), $t_0 = \pi/(8\rho)$ (top right), $t_0 = \pi/(6\rho)$ (bottom left), and $t_0 = \pi/(4\rho)$ (bottom right). The real and imaginary parts are given by dashed blue and dotted red plots, respectively.}		\label{fou-kmb-time-in-wanu}
\end{figure}
Figure~\ref{fou-kmb-wanu-in-time} displays the spatial Fourier spectrum of the KM breather as a function of time for several values of wavenumber $k = k_0$. All panels correspond to the parameter value $\mu = 1$. The real part, imaginary part, and modulus of the spectrum are depicted as dashed blue, dotted red, and solid black curves, respectively. We observe that all spectrum components are periodic with respect to time. For small $k$, the spectrum modulus reaches local maximum values of larger than six while the local minimum values are also positive and larger than one. As the values of $k$ increase, both local maximum and minimum values of the spectrum decrease, where the latter could vanish for particular temporal values. Furthermore, the spectrum moduli feature a local minimum that is sandwiched between local maxima for $k \leq 1$ in the neighborhood when $t = 0$. This feature starts to disappear as the wavenumber increases and eventually, only local maxima are visible.

Figure~\ref{fou-kmb-time-in-wanu} illustrates the spatial Fourier spectrum of the KM breather as a function of the wavenumber for several values of time $t = t_0$. We also take the breather parameter $\mu = 1$ in all cases. The real and imaginary parts of the spectrum are given by dashed blue and dotted red curves, respectively. For the former, the presence of the Dirac delta function is noticeable at $k = 0$, where the curves go to infinity negatively. The spectrum reduces to a real-valued function when $t = 0$ and its corresponding periodic time-scale. Both spectrum components vanish as $k \to \pm \infty$. While the imaginary part always reaches a local maximum at $k = 0$ for $t \neq 0$, the real-part does not always behave like that. As shown at the bottom right panel, the real part may reach a local minimum in a limiting sense when $k \to 0$ at $k = \pi/(4\rho)$. 

% Subsection 2.2
\subsection{Peregrine soliton}

Before deriving the spectrum for the Peregrine soliton, we state and verify the following lemma.
\begin{lemma}		\label{LemInt2}
For $a > 0$ and $k \in \mathbb{R}$, we have the following integral:
\begin{equation}
\int_{-\infty}^{\infty} \frac{\cos (k x)}{a + (2x)^2} \, dx = \frac{\pi}{2 \sqrt{a}} e^{-\frac{|k|}{2}\sqrt{a}}.
\end{equation} 	
\end{lemma}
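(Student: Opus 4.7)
The plan is to reduce the stated identity to the standard exponential Fourier transform of a Lorentzian via the change of variables $u = 2x$, and then prove that standard transform by contour integration analogous to the argument already carried out in Lemma~\ref{LemGradRyz}. The substitution $u = 2x$, $du = 2\,dx$, turns the integral into
\begin{equation*}
\int_{-\infty}^{\infty} \frac{\cos(kx)}{a + (2x)^2}\, dx = \frac{1}{2}\int_{-\infty}^{\infty} \frac{\cos(ku/2)}{a + u^2}\, du,
\end{equation*}
so it suffices to show $\int_{-\infty}^{\infty} \frac{\cos(\alpha u)}{a + u^2}\, du = \frac{\pi}{\sqrt{a}}\, e^{-|\alpha|\sqrt{a}}$ for $a>0$ and $\alpha \in \mathbb{R}$, and then apply it with $\alpha = k/2$.

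For the standard transform, I would first observe parity: since the integrand is even in $\alpha$, it suffices to treat $\alpha \geq 0$ and then replace $\alpha$ by $|\alpha|$ at the end. For $\alpha \geq 0$, I would consider the meromorphic function
\begin{equation*}
f(z) = \frac{e^{i\alpha z}}{a + z^2} = \frac{e^{i\alpha z}}{(z - i\sqrt{a})(z + i\sqrt{a})},
\end{equation*}
whose only singularities are simple poles at $z = \pm i\sqrt{a}$, and close a standard semicircular contour $\Gamma_R$ in the upper half-plane of radius $R > \sqrt{a}$, traversed counterclockwise. Only the pole $z = i\sqrt{a}$ lies inside, with residue
\begin{equation*}
\operatorname{res}(f, i\sqrt{a}) = \frac{e^{i\alpha\cdot i\sqrt{a}}}{2 i\sqrt{a}} = \frac{e^{-\alpha\sqrt{a}}}{2 i\sqrt{a}},
\end{equation*}
so by the Residue Theorem $\oint_{\Gamma_R} f(z)\, dz = \frac{\pi}{\sqrt{a}} e^{-\alpha\sqrt{a}}$.

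The main technical step, and arguably the only nonroutine part, is showing that the semicircular arc contribution vanishes as $R \to \infty$. For this I would invoke Jordan's lemma: on the upper arc $z = Re^{i\theta}$, $0 \leq \theta \leq \pi$, one has $|e^{i\alpha z}| = e^{-\alpha R \sin\theta} \leq 1$ (using $\alpha \geq 0$), and $|a + z^2| \geq R^2 - a$, so the arc integral is bounded by $\pi R/(R^2 - a) \cdot \max_\theta e^{-\alpha R \sin\theta}$, which tends to $0$ as $R \to \infty$ by the standard Jordan estimate. Taking the real part of the limit identity gives $\int_{-\infty}^{\infty} \cos(\alpha u)/(a + u^2)\, du = \frac{\pi}{\sqrt{a}} e^{-\alpha\sqrt{a}}$ for $\alpha \geq 0$; parity then upgrades $\alpha$ to $|\alpha|$. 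Finally, setting $\alpha = k/2$ and multiplying by the $\tfrac{1}{2}$ from the substitution yields $\frac{\pi}{2\sqrt{a}} e^{-|k|\sqrt{a}/2}$, which is the claimed expression. This completes the proof proposal.
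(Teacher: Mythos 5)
Your proposal is correct and follows essentially the same route as the paper: an upper semicircular contour, the residue at the single pole in the upper half-plane, an $O(1/R)$ bound on the arc (so Jordan's lemma is not even needed), and taking the real part at the end. The preliminary substitution $u = 2x$ and the parity argument in $\alpha$ are only cosmetic differences from the paper's direct treatment of $e^{i|k|z}/(a+(2z)^2)$ with pole at $z = i\sqrt{a}/2$.
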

\begin{proof}
We show the proof using the semi-circular contour theorem from complex analysis, e.g., Theorem 9.1 in~\cite{howie2003complex} (on page 154--155). Let
\begin{equation*}
f(z) = \frac{e^{i |k| z}}{a + (2z)^2}
\end{equation*}
be a complex-valued function and meromorphic in the upper half-plane. We observe that $f$ has no poles on the real axis and it has only two poles at $z = i\sqrt{a}/2$ and $z = -i\sqrt{a}/2$. Only the first of these is in the upper half-plane. We calculate that
\begin{equation*}
\text{res}\left(f, i\frac{\sqrt{a}}{2} \right) = \frac{e^{-\frac{|k|}{2}\sqrt{a}}}{4i \sqrt{a}}.
\end{equation*}
In the upper half-plane, for all sufficiently large $|z|$, we have
\begin{align*}
\left|z f(z) \right| = \left|\frac{z \, e^{i |k| z}}{a + (2z)^2} \right| = \left|\frac{z \, e^{i |k| x} \, e^{-|k|y}}{4z^2 + a} \right| \leq 
\frac{|z|}{4 |z|^2 - a}.
\end{align*}

\begin{figure}[htbp]
	\begin{center}
		\begin{tikzpicture}
		\draw[->] (-5,0)--(5,0) node[right]{Re($z$)};
		\draw[->] (0,-1)--(0,4) node[above]{Im($z$)};
		\draw[color=blue,thick] (3,0) arc (0:180:3) to cycle;
		\node[below] at (-0.2,0) {O};
		\node[below] at (3,0) {$R$};
		\node[below] at (-3.1,0) {$-R$};
		\draw[-{Stealth[length=1mm,width=2mm]},color=blue] (-1.6,0)--(-1.5,0);
		\draw[-{Stealth[length=1mm,width=2mm]},color=blue] (1.5,0)--(1.6,0);
		\draw[-{Stealth[length=1mm,width=2mm]},color=blue] (2,2.23)--(1.95,2.26);
		\draw[-{Stealth[length=1mm,width=2mm]},color=blue] (-1.95,2.27)--(-2,2.23);
		\node[circle,fill,inner sep=1pt,color=red] at (0,2){};
		\node[right] at (0,2) {$i\frac{\sqrt{a}}{2}$};
		\node[color=blue] at (3,1.3) {$\gamma_{\text{\tiny $P$}}$};
		\end{tikzpicture}
	\end{center}
	\caption{A semicircle contour $\gamma_{\text{\tiny $P$}} = \left[-R,R \right] \cup \left\{z : |z| = R, \; \text{Im} \; z \geq 0 \right\}$, traversed in the positive direction with $R > \sqrt{a}/2$.} 		\label{semicircle}
\end{figure}
We have adopted the relationship Re$(z) = x$ and Im$(z) = y$ and the fact that $\left|e^{i |k| x} \right| = 1$ and $\left|e^{-|k|y} \right| \leq 1$ for $y \geq 0$. It follows that $\left|z f(z)\right|$ tends uniformly to $0$ as $|z| \to \infty$ in the upper half-plane. Let $\gamma_{\text{\tiny $P$}}$ be a semicircle contour given by $\gamma_{\text{\tiny $P$}} : \left[-R,R \right] \cup \left\{z : |z| = R, \; \text{Im} \; z \geq 0 \right\}$, traversed in the positive direction with $R > \sqrt{a}/2$, as shown in Figure~\ref{semicircle}, then
\begin{align*}
\text{(PV)} \int_{-\infty}^{\infty} f(x) \, dx &= \lim\limits_{R \to \infty} \int_{-R}^{R} f(x) \, dx 
= \int_{\gamma_{\text{\tiny $P$}}} f(z) \, dz - \lim\limits_{R \to \infty} \int_{0}^{\pi} f(R e^{i \theta}) \, i R e^{i \theta} \, d\theta \\
\end{align*}
where PV stands for the principal value. Indeed, from the condition of $\left|z f(z) \right|$ uniformly in the upper half-plane as $|z| \to \infty$, it implies that for all $\varepsilon > 0$, there exists $S > 0$ such that $\left|z f(z) \right| < \varepsilon$ for all $z$ in the upper half-plane such that $|z| > S$. Consequently, for all $R > S$
\begin{equation*}
\left| \int_{0}^{\pi} f(R e^{i \theta}) \, i R e^{i \theta} \, d\theta \right| \leq \int_{0}^{\pi} \left|\frac{e^{i |k| R e^{i \theta}}  i R e^{i \theta}}{a + 4 R^2 e^{2i \theta}}\right| \, d\theta \leq \frac{\pi R}{4 R^2 - a} < \pi \varepsilon
\end{equation*}
and thus
\begin{equation*}
\int_{0}^{\pi} f(R e^{i \theta}) \, i R e^{i \theta} \, d\theta \to 0 \qquad \text{as} \quad R \to \infty.
\end{equation*}
Since there exists a constant $\widehat{S} > 0$ such that $\left|f(x)\right| \leq \widehat{S}/x^2$ for large $|x|$, then ${\displaystyle \int_{\infty}^{\infty} f(x) \, dx}$ exists and equals to its Cauchy principal value. Hence, we can dispense with the principal value and by applying the Residue Theorem, we deduce that
\begin{equation*}
\int_{-\infty}^{\infty} \frac{e^{i |k| x}}{a + (2x)^2} \, dx = 
\int_{-\infty}^{\infty} \frac{\cos (kx) + i \sin \left(|k|x \right)}{a + (2x)^2} \, dx =
2 \pi i \; \text{res}\left(f, i\frac{\sqrt{a}}{2} \right) = \frac{\pi}{2 \sqrt{a}} e^{-\frac{|k|}{2} \sqrt{a}}.
\end{equation*}
Equating the real parts gives the desired result and the proof is completed.
\end{proof}
\begin{figure}[htbp]
	\begin{center}
		\includegraphics[width=0.45\textwidth]{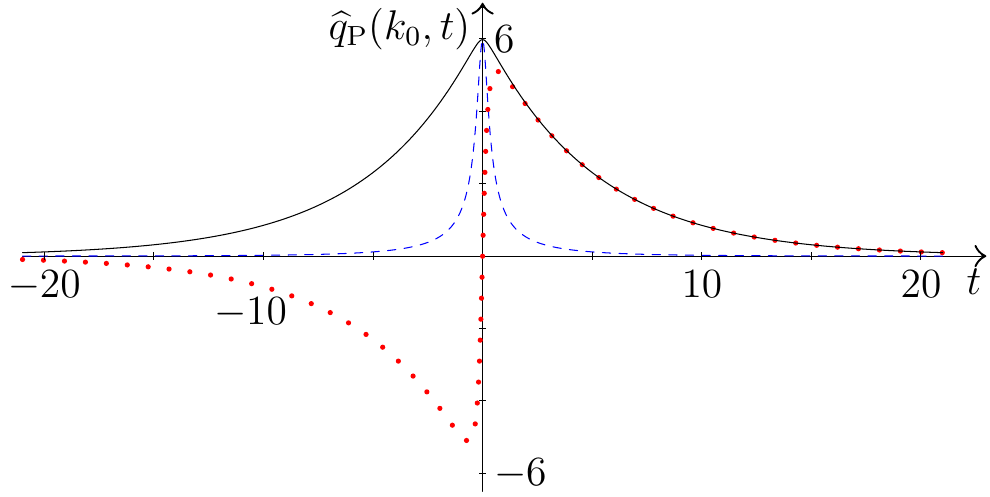} 		\hspace*{1cm}
		\includegraphics[width=0.45\textwidth]{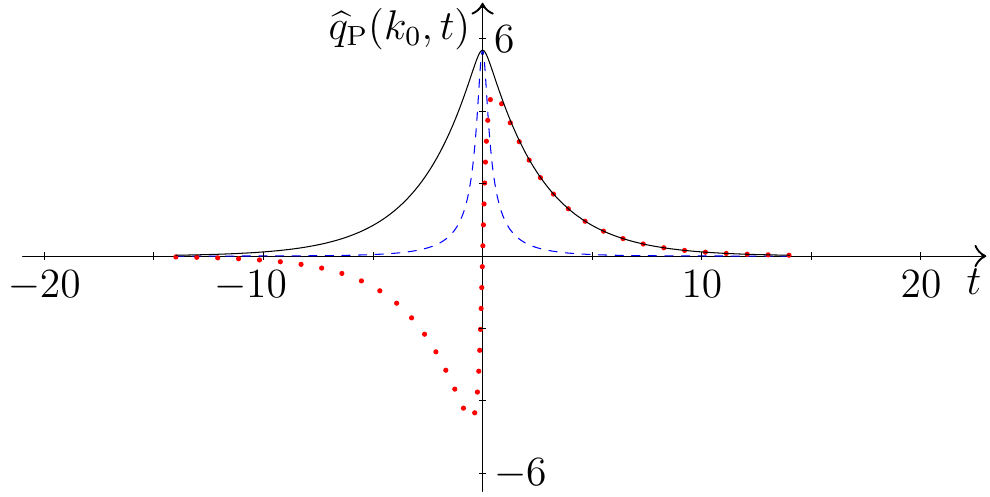} \\ 	\vspace*{0.5cm} 
		\includegraphics[width=0.45\textwidth]{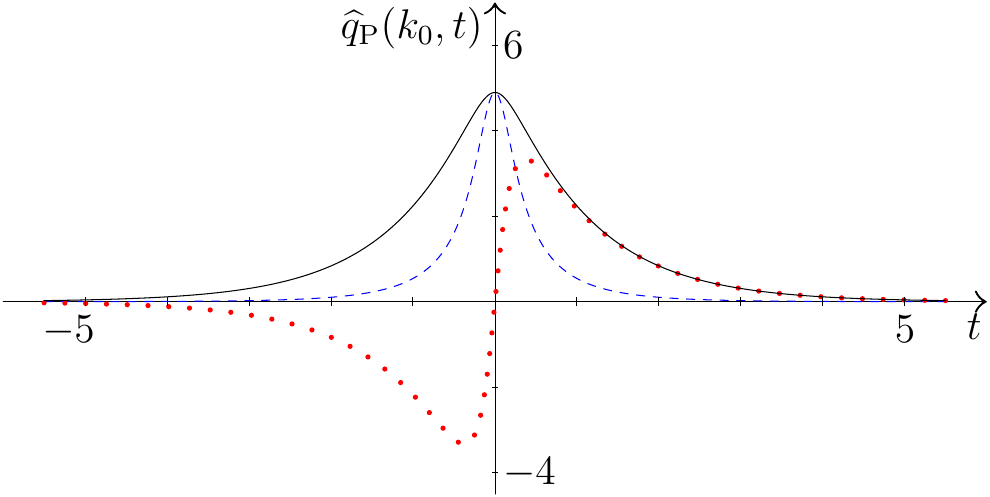} 		\hspace*{1cm}
		\includegraphics[width=0.45\textwidth]{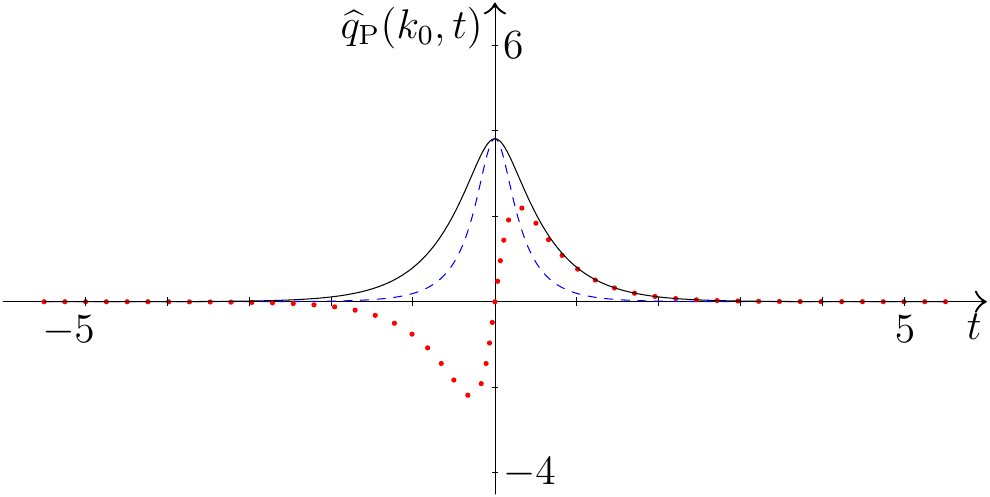}
	\end{center}
	\caption{The spatial Fourier spectrum of the Peregrine breather $\widehat{q}_P(k_0,t)$ as a function of time $t$ for $k_0 = 0.1$ (top left), $k_0 = 0.2$ (top right), $k_0 = 0.5$ (bottom left), and $k_0 = 1$ (bottom right). The real and imaginary parts are given by dashed blue and solid red plots, respectively. The solid black curve is the modulus spectrum amplitude.}		\label{fou-per-wanu-in-time}
\end{figure}

\begin{theorem}
The spatial Fourier spectrum of the Peregrine soliton $q_P(x,t)$ is given by
\begin{equation}
\widehat{q}_P(k,t) = 2\pi e^{2it} \left(\frac{1 + 4it}{\sqrt{1 + (4t)^2}} e^{-\frac{k}{2} \sqrt{1 + (4t)^2}} - \delta (k) \right).
\end{equation}
\end{theorem}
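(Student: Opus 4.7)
The plan is to apply the definition of the spatial Fourier transform (Definition~\ref{DefSFT}) to the Peregrine soliton and reduce the computation to two known pieces: the transform of the constant $1$ from Lemma~\ref{LemDiracDelta}, and the cosine integral from Lemma~\ref{LemInt2}. Concretely, I would write
\[
\widehat{q}_P(k,t) = e^{2it}\int_{-\infty}^{\infty}\!\frac{4(1+4it)}{1+(4t)^2+(2x)^2}\,e^{-ikx}\,dx \;-\; e^{2it}\int_{-\infty}^{\infty} e^{-ikx}\,dx,
\]
pulling $e^{2it}$ out of the $x$-integration (it is $x$-independent) and, in the first integral, pulling out the $x$-independent prefactor $4(1+4it)$ as well.

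Next, I would exploit parity. The denominator $1+(4t)^2+(2x)^2$ is even in $x$, so when I split $e^{-ikx}=\cos(kx)-i\sin(kx)$, the $\sin(kx)$ piece integrates to zero and only the cosine contribution survives. Applying Lemma~\ref{LemInt2} with $a=1+(4t)^2>0$ immediately evaluates
\[
\int_{-\infty}^{\infty}\!\frac{\cos(kx)}{1+(4t)^2+(2x)^2}\,dx \;=\; \frac{\pi}{2\sqrt{1+(4t)^2}}\,e^{-\frac{|k|}{2}\sqrt{1+(4t)^2}},
\]
so the rational part contributes $\dfrac{2\pi(1+4it)}{\sqrt{1+(4t)^2}}\,e^{-\frac{|k|}{2}\sqrt{1+(4t)^2}}$. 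For the second integral, Lemma~\ref{LemDiracDelta} gives $2\pi\delta(k)$.

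Assembling the pieces yields exactly the stated formula
\[
\widehat{q}_P(k,t)=2\pi e^{2it}\!\left(\frac{1+4it}{\sqrt{1+(4t)^2}}\,e^{-\frac{|k|}{2}\sqrt{1+(4t)^2}}-\delta(k)\right),
\]
and the proof is essentially a one-line application of the two preceding lemmas. There is no real obstacle here: all of the hard analytic work (the contour integration) has already been done in Lemma~\ref{LemInt2}; the only points requiring minor care are justifying the factorization of the $x$-independent complex quantity $1+4it$ out of the integral, and noting that the integral in Lemma~\ref{LemDiracDelta} is understood in the distributional sense so that the $-1$ term correctly produces the Dirac mass at $k=0$ (thereby also covering the $k=0$ case without need for a separate limit as in the KM case).
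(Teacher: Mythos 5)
Your proposal is correct and takes essentially the same route as the paper's own proof: pull out $e^{2it}(1+4it)$, use evenness to reduce to the cosine integral, apply Lemma~\ref{LemInt2} with $a=1+(4t)^2$, and handle the constant term with Lemma~\ref{LemDiracDelta}. The only difference worth noting is that you retain the $|k|$ from Lemma~\ref{LemInt2} in the exponent, whereas the theorem as printed writes $k$; since the cosine integral is even in $k$, your version with $|k|$ is the one that is valid for $k<0$.
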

\begin{proof}
Using Definition~\eqref{DefSFT} and the fact that $q_P(x,t)$ is an even function with respect to $x$, we can write the spatial Fourier transform $\widehat{q}_P(k,t)$ as follows:
\begin{align}
\widehat{q}_P(k,t) &= \int_{-\infty}^{\infty} e^{2it} \left(\frac{4 (1 + 4 i t)}{1 + (4t)^2 + (2x)^2} \right) e^{-i k x} \, dx - \int_{-\infty}^{\infty} 1 \, e^{-i k x} \, dx \\
&= 4 e^{2it} (1 + 4it) \int_{-\infty}^{\infty} \frac{\cos (k x)}{1 + (4t)^2 + (2x)^2} \, dx - \int_{-\infty}^{\infty} e^{-i k x} \, dx. 
\end{align}	
Using Lemmas~\ref{LemDiracDelta} and~\ref{LemInt2}, we obtain the desired result:
\begin{align}
\widehat{q}_P(k,t) &= 4 e^{2it} (1 + 4it) \left(\frac{\pi}{2 \sqrt{1 + (4t)^2}} e^{-\frac{k}{2} \sqrt{1 + (4t)^2}} \right) - 2\pi \delta(k) \\
&= 2\pi e^{2it} \left(\frac{1 + 4it}{\sqrt{1 + (4t)^2}} e^{-\frac{k}{2} \sqrt{1 + (4t)^2}} - \delta (k) \right).
\end{align}
This completes the proof of the theorem.
\end{proof}
\begin{figure}[htbp]
\begin{center}
\includegraphics[width=0.45\textwidth]{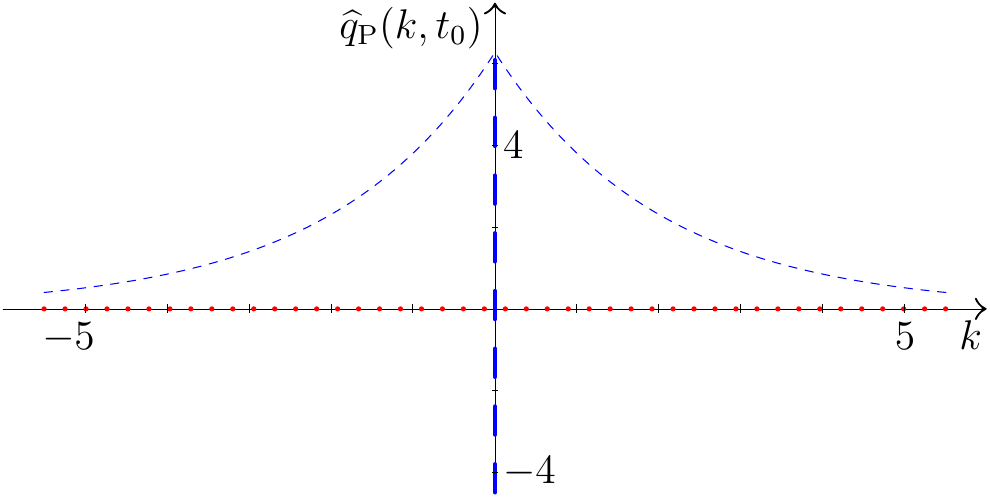} 		\hspace*{1cm}
\includegraphics[width=0.45\textwidth]{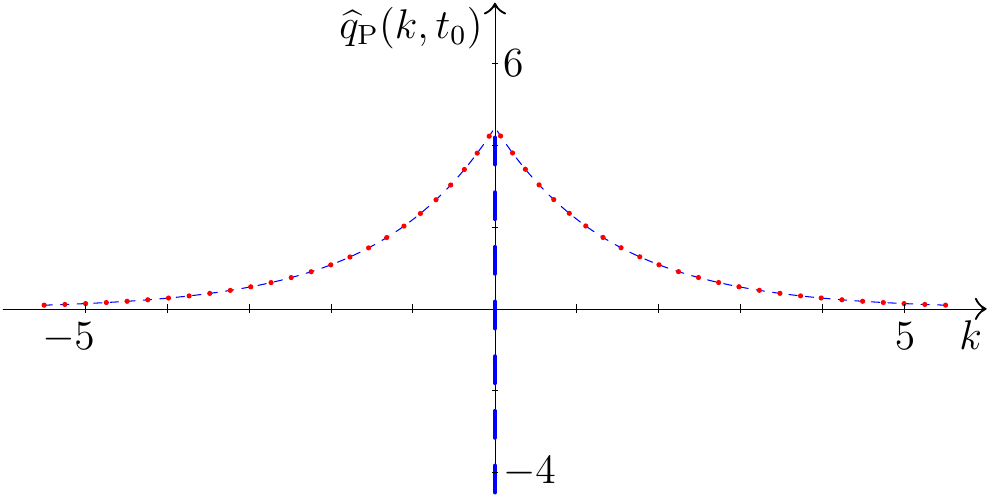} \\ 	\vspace*{0.5cm} 
\includegraphics[width=0.45\textwidth]{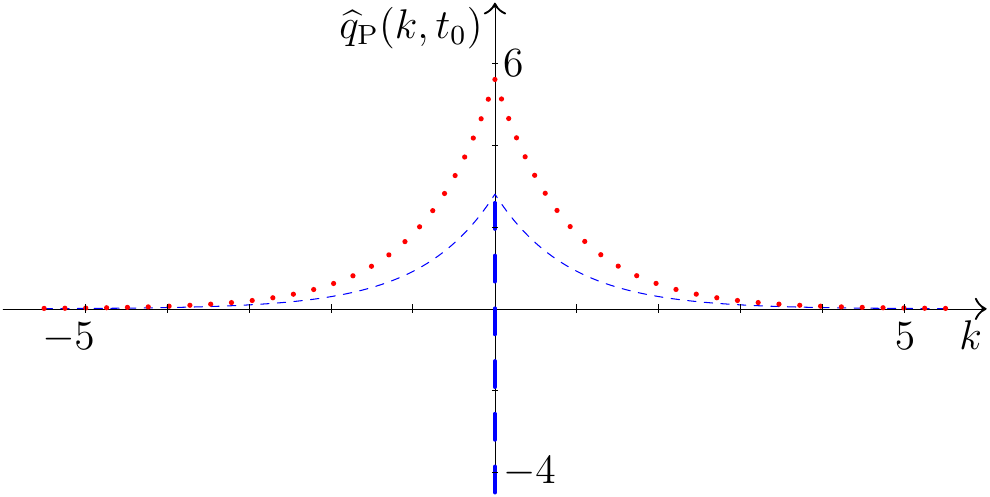} 		\hspace*{1cm}
\includegraphics[width=0.45\textwidth]{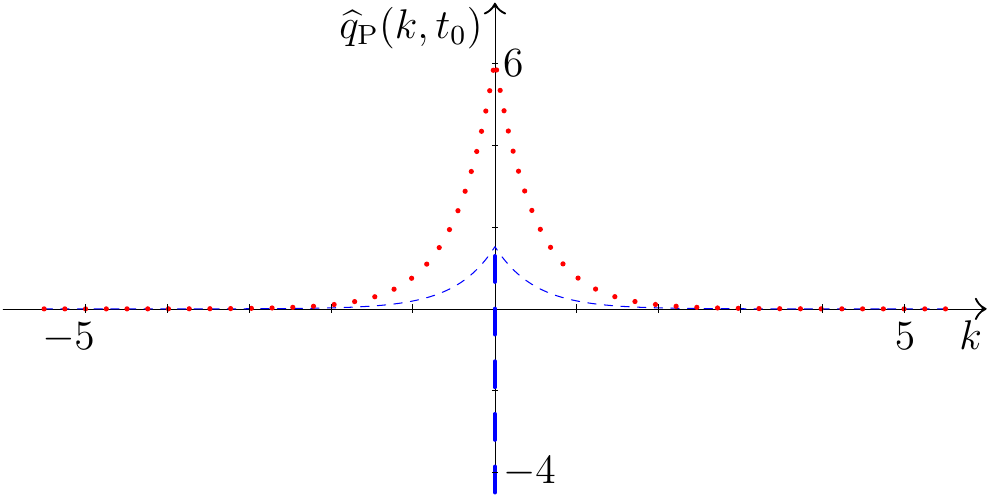}
\end{center}
\caption{The spatial Fourier spectrum of the Peregrine breather $\widehat{q}_P(k,t_0)$ as a function of the wavenumber $k$ when $t_0 = 0$ (top left), $t_0 = 1/4$ (top right), $t_0 = 1/2$ (bottom left), and $t_0 = 1$ (bottom right). The real and imaginary parts are given by dashed blue and dotted red plots, respectively.}		\label{fou-per-time-in-wanu}
\end{figure}
Figure~\ref{fou-per-wanu-in-time} shows the spatial Fourier spectrum of the Peregrine breather as a function of time for various values of wavenumber $k = k_0$. The real-part, imaginary part, and modulus of the spectrum are depicted in dashed blue, dotted red, and solid black curves, respectively. All spectrum components tend to vanish as $t \to \pm \infty$. Both the real-part and modulus of the spectrum reach local maxima at $t = 0$ while the imaginary part takes negative and positive values for $t < 0$ and $t > 0$, respectively. It reaches local minimum and maximum for some values $t$ not too far away from $t = 0$. The value of local maxima for the spectrum modulus decreases as the value of the wavenumber increases. For small wavenumber, the local maximum reaches the value of six as $k \to 0$.

Figure~\ref{fou-per-time-in-wanu} features the spatial Fourier spectrum of the Peregrine breather as a function of the wavenumber for several values of time $t = t_0$. The real and imaginary parts of the spectrum are plotted as dashed blue and dotted red curves, respectively. Similar to the previous case, the appearance of the Dirac delta function is visible at $k = 0$ as the curve extends infinitely negative. The imaginary part vanishes when $t_0 = 0$ and thus the spectrum reduces to a real-valued function. For $t > 0$, both real and imaginary parts tend to vanish as $k \to \pm \infty$. They also reach local maxima at $k = 0$, and depending on the values of $k$, the local maxima for the real part can be larger, equal, or smaller than the ones for the imaginary part. The largest of these quantities takes the value of slightly larger than six at $k = 0$.

% Subsection 2.3
\subsection{Akhmediev breather}

% Lemma
\begin{lemma}	\label{lemtrigint}
For $0 < |a| < b$, the following trigonometric integral holds, c.f.~Formula~3.613.1 in~\cite{gradshteyn2015table}:
\begin{equation}
\int_{0}^{2\pi} \frac{\cos (n \xi)\, d \xi}{b - a\,\cos \xi} = \frac{2 \pi}{\sqrt{b^{2} - a^{2}}} \left(\frac{b - \sqrt{b^{2} - a^{2}}}{a} \right)^{n}, \qquad \; \textmd{for} \; \; n \in \mathbb{Z}.		  \label{trigoint0}
\end{equation}
\end{lemma}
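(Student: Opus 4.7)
The plan is to transform this trigonometric integral into a contour integral on the unit circle via $z = e^{i\xi}$ and invoke the Residue Theorem, in the same spirit as the two preceding lemmas. Since $\cos(n\xi)$ is even in $n$, it suffices to treat $n \geq 0$. Moreover, the substitution $\xi \mapsto 2\pi - \xi$ leaves $b - a\cos\xi$ invariant while interchanging $e^{in\xi}$ and $e^{-in\xi}$, so
$$\int_0^{2\pi}\frac{e^{in\xi}}{b - a\cos\xi}\, d\xi \;=\; \int_0^{2\pi}\frac{e^{-in\xi}}{b - a\cos\xi}\, d\xi \;=\; \int_0^{2\pi}\frac{\cos(n\xi)}{b - a\cos\xi}\, d\xi.$$
This reduction is the first move: replacing $\cos(n\xi)$ by the single exponential $e^{in\xi}$ means that only $z^n$ (not $z^n + z^{-n}$) appears in the transformed numerator, which is crucial for avoiding a pole at the origin.

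Next, with $z = e^{i\xi}$, $d\xi = dz/(iz)$, and $\cos\xi = (z + z^{-1})/2$, the denominator rewrites as $b - a\cos\xi = -(az^2 - 2bz + a)/(2z)$, and a short algebraic simplification turns the integral into
$$\oint_{|z|=1}\frac{2i\, z^n}{az^2 - 2bz + a}\, dz.$$
The quadratic $az^2 - 2bz + a$ has roots $z_\pm = (b \pm \sqrt{b^2 - a^2})/a$ satisfying $z_+ z_- = 1$, and the hypothesis $0 < |a| < b$ implies by an elementary estimate (for $a > 0$ use $(b - a)^2 < b^2 - a^2$; the case $a < 0$ is symmetric) that exactly one of them, namely $z_-$, lies strictly inside the unit circle. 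For $n \geq 0$ this is the only pole enclosed by $\gamma = \{|z| = 1\}$.

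Finally, I would compute the residue at the simple pole $z_-$,
$$\operatorname{res}\!\left(\frac{2i\, z^n}{az^2 - 2bz + a},\; z_-\right) \;=\; \frac{2i\, z_-^n}{a(z_- - z_+)} \;=\; \frac{-\,i\, z_-^n}{\sqrt{b^2 - a^2}},$$
and apply the Residue Theorem to obtain $2\pi z_-^n / \sqrt{b^2 - a^2}$, which is exactly the right-hand side of~\eqref{trigoint0}. The main obstacle is not computational but organizational: certifying cleanly that $|z_-| < 1 < |z_+|$ under $0 < |a| < b$ across both signs of $a$, and carrying out the preliminary symmetry reduction from $\cos(n\xi)$ to $e^{in\xi}$, without which a pole of order $n$ at the origin would appear and its residue would have to be computed and shown to combine correctly.
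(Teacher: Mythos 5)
Your proposal is correct and follows essentially the same route as the paper: substitute $z = e^{i\xi}$ to obtain $2i\oint_{|z|=1} z^n\,dz/(az^2-2bz+a)$, note that only the root $z_- = (b-\sqrt{b^2-a^2})/a$ of the quadratic lies inside the unit circle (since $z_+z_-=1$ and $|z_+|>1$), and apply the Residue Theorem. Your extra care in reducing $\cos(n\xi)$ to $e^{in\xi}$ for $n\ge 0$ (avoiding a pole at the origin) and in treating both signs of $a$ is a slight tightening of the paper's argument, not a different approach.
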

\begin{proof}
By performing the integration in the complex plane, we can write the integral as the real part of another integral expressed on the right-hand side of the following equation:
\begin{equation*}
\int_{0}^{2\pi} \frac{\cos (n \xi)\, d \xi}{b - a\,\cos \xi} = \text{Re} \left\{\int_{0}^{2\pi} \frac{e^{i n \xi} \, d \xi}{b - a\,\cos \xi} \right\}.
\end{equation*}
Let $C: z = e^{i \xi}$, $0 \leq \xi \leq 2\pi$ be the unit circle in the complex plane, then $e^{i n \xi} = z^n$, $\cos \xi = \frac{1}{2}(z + z^{-1})$, and $d\xi = -i dz/z$. See Figure~\ref{fullcircle}. Substituting these into the integral above, we can express it as
\begin{equation*}
\int_{0}^{2\pi} \frac{\cos (n \xi)\, d \xi}{b - a\,\cos \xi} = 2i \oint_{C} \frac{z^{n} \, dz}{a z^2 - 2 b z + a}
= 2i \oint_{C} \frac{z^{n} \, dz}{a (z - z_1)(z - z_2)}
\end{equation*}
where
\begin{equation*}
z_1 = \frac{b + \sqrt{b^2 - a^2}}{a} \qquad \text{and} \qquad z_2 = \frac{b - \sqrt{b^2 - a^2}}{a}
\end{equation*}
are the roots of the quadratic equation $a z^2 - 2 b z + a = 0$. The integrand has simple poles at $z = z_1$ and $z = z_2$. Observe that
\begin{equation*}
|z_1| = \frac{b + \sqrt{b^2 - a^2}}{a}  > \frac{b}{a} > 1 
\end{equation*}
and thus $z_1$ does not lie within the contour of $C$. Furthermore, since $z_1 z_2 = a/a = 1$, we deduce that $|z_2| < 1$. The residue at $z = z_2$ is given by
\begin{equation*}
\frac{z_2^n}{a(z_2 - z_1)} = \frac{1}{-2 \sqrt{b^2- a^2}} \left(\frac{b + \sqrt{b^2 - a^2}}{a}\right)^n.
\end{equation*}
Using Cauchy's Residue Theorem, we obtain the desired trigonometric integral:
\begin{equation*}
\int_{0}^{2\pi} \frac{\cos (n \xi)\, d \xi}{b - a\,\cos \xi} =  \frac{(2 \pi i) \, (2i)}{-2 \sqrt{b^2- a^2}} \left(\frac{b + \sqrt{b^2 - a^2}}{a}\right)^n
= \frac{2 \pi}{\sqrt{b^{2} - a^{2}}} \left(\frac{b - \sqrt{b^{2} - a^{2}}}{a} \right)^{n}.
\end{equation*}
This completes the proof.
\end{proof}
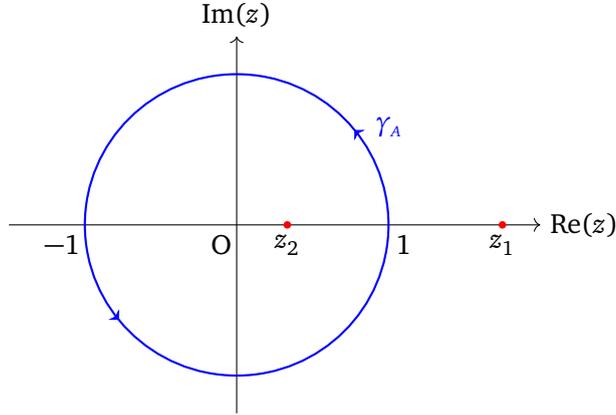
\begin{figure}[htbp]
\begin{center}
\begin{tikzpicture}
\draw[->] (-3,0)--(4,0) node[right]{Re($z$)};
\draw[->] (0,-2.5)--(0,2.5) node[above]{Im($z$)};
\draw[color=blue,thick] (2,0) arc (0:360:2) to cycle;
\node[below] at (-0.2,0) {O};
\node[below] at (2.2,0) {$1$};
\node[below] at (-2.3,0) {$-1$};
\draw[-{Stealth[length=1mm,width=2mm]},color=blue] (1.6,1.21)--(1.55,1.26);
\draw[-{Stealth[length=1mm,width=2mm]},color=blue] (-1.6,-1.21)--(-1.55,-1.26);
\node[circle,fill,inner sep=1pt,color=red] at (2/3,0){};
\node[below] at (2/3,0) {$z_2$};
\node[circle,fill,inner sep=1pt,color=red] at (3.5,0){};
\node[below] at (3.5,0) {$z_1$};
\node[color=blue] at (2,1.3) {$\gamma_{\text{\tiny $A$}}$};
\end{tikzpicture}
\end{center}
\caption{A unit circle contour $\gamma_{\text{\tiny $A$}} = \left\{z : |z| = 1 \right\}$ traversed in the positive direction with $z_2 < 1 < z_1$.} 		\label{fullcircle}
\end{figure}

% Corollary
\begin{corollary}	\label{cortrig}
For $0 \leq |a| < 1$, we have the following integral:  
\begin{equation}
\int_{0}^{2\pi} \frac{d\xi}{1 - a\,\cos \xi} = \frac{2\pi}{\sqrt{1 - a^{2}}}. 	\label{trigoint1}
\end{equation}	
\end{corollary}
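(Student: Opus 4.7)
The corollary is essentially a direct specialization of Lemma~\ref{lemtrigint}, so the plan is to reduce it to that lemma in one line and then patch the boundary case $a = 0$ which the hypothesis of Lemma~\ref{lemtrigint} excludes.

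First I would treat the generic case $0 < |a| < 1$. Here I simply invoke Lemma~\ref{lemtrigint} with $b = 1$ (so that the strict inequality $|a| < b$ in the lemma's hypothesis is satisfied) and $n = 0$. With these choices the factor $\left(\tfrac{b - \sqrt{b^2 - a^2}}{a}\right)^n$ in the right-hand side of~\eqref{trigoint0} becomes $1$, and the denominator $\sqrt{b^2 - a^2}$ becomes $\sqrt{1 - a^2}$, yielding exactly the formula~\eqref{trigoint1}.

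Next I would handle the degenerate case $a = 0$, which is permitted by the corollary's hypothesis $0 \leq |a| < 1$ but not by the lemma. Here the integrand reduces to the constant $1$, so
\begin{equation*}
\int_{0}^{2\pi} \frac{d\xi}{1 - 0 \cdot \cos\xi} = \int_{0}^{2\pi} d\xi = 2\pi,
\end{equation*}
which coincides with $2\pi/\sqrt{1 - 0^2}$, so the formula is consistent at $a = 0$.

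There is no genuine obstacle in this proof; the only small care needed is noting that Lemma~\ref{lemtrigint} as stated requires $|a| > 0$ (otherwise the expression $(b - \sqrt{b^2 - a^2})/a$ appearing in its statement is a $0/0$ indeterminate form), so the endpoint $a = 0$ must be verified separately by direct evaluation. Combining the two cases gives the corollary for all $a$ with $0 \leq |a| < 1$.
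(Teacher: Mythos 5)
Your proof is correct and follows essentially the same route as the paper, which simply sets $b = 1$ and $n = 0$ in Lemma~\ref{lemtrigint}. Your additional check of the endpoint $a = 0$ (which the lemma's hypothesis $0 < |a| < b$ excludes) is a small but valid refinement that the paper omits.
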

\begin{proof}
By taking $b = 1$ and $n = 0$ in Lemma~\ref{lemtrigint}, i.e., Equation~\eqref{trigoint0}, we obtain the result immediately.
\end{proof}

In what follows, we state and prove the spatial Fourier spectrum for the Akhmediev breather. A similar expression of this spectrum is available in~\cite{akhmediev1997solitons}, but the authors did not provide any proof. On the other hand, a derivation proof using trigonometric integrals and series has been attempted by the author in an appendix of his PhD thesis~\cite{karjanto2006thesis}.
\begin{theorem}
The spatial Fourier amplitude spectrum for the Akhmediev breather $q_A(x,t)$ is given by
\begin{equation}
\widehat{q}_{A}^{(n)}(k, t) = \widehat{q}_{A}^{(n)}(\kappa, t) \left\{
\begin{array}{ll}
e^{2it} \left(\frac{\kappa^3 \cosh(\sigma t) + i \kappa \sigma \sinh (\sigma t)}{\sqrt{4\kappa^2 \cosh^2(\sigma t) - \sigma^2}} - 1 \right), & \quad \text{for} \; n = 0, \\
e^{2it} \left(\frac{\kappa^3 \cosh(\sigma t) + i \kappa \sigma \sinh (\sigma t)}{\sqrt{4\kappa^2 \cosh^2(\sigma t) - \sigma^2}} \right)
\left(\frac{2\kappa \cosh(\sigma t) - \sqrt{4\kappa^2 \cosh^2 (\sigma t) - \sigma^2}}{\sigma}\right)^{n}, & \quad \text{for} \; n \in \mathbb{N}.
\end{array}
\right.
\end{equation}
\end{theorem}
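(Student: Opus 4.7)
The plan is to view $q_A$ as a spatially periodic function in $x$ with fundamental period $L_A = 2\pi/\kappa$, so that discrete wavenumbers $k_n = n\kappa$ appear, and to compute the complex Fourier coefficients directly from the definition
\[
\widehat{q}_A^{(n)}(t) = \frac{1}{2L_A}\int_{-L_A}^{L_A} q_A(x,t)\, e^{-i n \kappa x}\, dx.
\]
The first step is to split $q_A$ into the rational--trigonometric piece $e^{2it} N(t)/(b - a\cos(\kappa x))$, with $N(t)=\kappa^3\cosh(\sigma t)+i\kappa\sigma\sinh(\sigma t)$, $b = 2\kappa\cosh(\sigma t)$, $a = \sigma$, plus the constant $-e^{2it}$. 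By orthogonality of complex exponentials over a full number of periods, the constant piece contributes $-e^{2it}$ only to the $n = 0$ coefficient, while the rational--trigonometric piece contributes to every $n$.

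For the rational--trigonometric piece, the denominator is even in $x$, so the imaginary part of $e^{-in\kappa x}$ integrates to zero and $e^{-in\kappa x}$ may be replaced by $\cos(n\kappa x)$. Using $L_A$-periodicity of the integrand, the integral over $[-L_A, L_A]$ becomes twice the integral over one period, and the substitution $\xi = \kappa x$ reshapes everything into the canonical form of Lemma~\ref{lemtrigint}, namely $\int_0^{2\pi} \cos(n\xi)/(b - a\cos\xi)\, d\xi$. The hypothesis $|a| < b$ is satisfied because $\sigma = \kappa\sqrt{4-\kappa^2} < 2\kappa \le 2\kappa\cosh(\sigma t)$ for all admissible $\kappa \in (0,2)$ and all $t \in \mathbb{R}$. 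The lemma then supplies $\sqrt{b^2 - a^2} = \sqrt{4\kappa^2\cosh^2(\sigma t) - \sigma^2}$ together with the geometric factor $\bigl((b - \sqrt{b^2-a^2})/a\bigr)^{n}$; for $n = 0$ this factor is $1$ (consistent with Corollary~\ref{cortrig}), while for $n \in \mathbb{N}$ it is nontrivial. Multiplication by $e^{2it} N(t)$ and collection of the prefactors $1/(2L_A) = \kappa/(4\pi)$, the Jacobian $1/\kappa$, the factor $2$ from the periodicity reduction, and the $2\pi$ delivered by the lemma all conspire to cancel, producing exactly the claimed formula for $n \in \mathbb{N}$ and, after adding the $-e^{2it}$ from the constant piece, the claimed formula for $n = 0$.

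The main obstacle is purely organizational: threading the four normalization factors so that they collapse to $1$, and keeping the two cases ($n = 0$ versus $n \in \mathbb{N}$) consistent with the evaluation of Lemma~\ref{lemtrigint} at $n = 0$. A minor care point is that the theorem only lists coefficients for nonnegative $n$; evenness of $\cos(n\xi)$ in $n$ together with $a > 0$ renders the rational--trigonometric coefficients symmetric under $n \mapsto -n$, so the negative indices follow at once and require no further work.
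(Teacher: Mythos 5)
Your proposal is correct and follows essentially the same route as the paper: treat $q_A$ as $2\pi/\kappa$-periodic, compute the complex Fourier coefficients at $k_n = n\kappa$, use evenness and the substitution $\xi = \kappa x$ to reduce to the trigonometric integral of Lemma~\ref{lemtrigint} with $a=\sigma$, $b=2\kappa\cosh(\sigma t)$, and track the normalization constants. Your explicit verification of the hypothesis $|a|<b$ and your remark that the constant term survives only in the $n=0$ coefficient are small additions the paper leaves implicit, but the argument is the same.
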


\begin{proof}
The Akhmediev breather is periodic in space with the period $L = 2\pi/\kappa$. Furthermore, we consider the wavenumber values $k_n$ at a discrete multiple of the modulation wavenumber $\kappa$, i.e., $k_n = n \kappa$, $n \in \mathbb{Z}$. Hence,
\begin{equation*}
\widehat{q}_A^{(n)}(\kappa,t) = \frac{\kappa}{4\pi} \int_{-2\pi/\kappa}^{2\pi/\kappa} q_A(x,t) e^{-in \kappa x} \, dx.
\end{equation*}
Employing the variable substitution $\xi = \kappa x$, and using the fact that $q_A$ is an even function with respect to the spatial variable $x$, it reduces to
\begin{align*}
\widehat{q}_A^{(n)}(\kappa,t) &= \frac{1}{4\pi} \int_{-2\pi}^{2\pi} e^{2it} \left(\frac{\kappa^3 \cosh(\sigma t) + i \kappa \sigma \sinh (\sigma t)}{2 \kappa \cosh(\sigma t) - \sigma \cos \xi} - 1 \right) \cos (n\xi) \, d\xi \\
&= \frac{1}{2\pi} \int_{0}^{2\pi} e^{2it} \left(\frac{\kappa^3 \cosh(\sigma t) + i \kappa \sigma \sinh (\sigma t)}{2 \kappa \cosh(\sigma t) - \sigma \cos \xi} - 1 \right) \cos (n\xi) \, d\xi 
\end{align*}
For $n = 0$, applying Lemma~\ref{lemtrigint} or Corollary~\ref{cortrig}, we obtain
\begin{equation*}
\widehat{q}_A^{(0)}(\kappa,t) = e^{2it} \left(\frac{\kappa^3 \cosh(\sigma t) + i \kappa \sigma \sinh (\sigma t)}{\sqrt{4 \kappa^2 \cosh^2(\sigma t) - \sigma^2}} - 1 \right).
\end{equation*} 
For $n \neq 0$, taking $a = \sigma$ and $b = 2 \kappa \cosh(\sigma t)$, we obtain
\begin{equation*}
\widehat{q}_A^{(n)}(\kappa,t) = e^{2it} \left(\frac{\kappa^3 \cosh(\sigma t) + i \kappa \sigma \sinh (\sigma t)}{\sqrt{4 \kappa^2 \cosh^2(\sigma t) - \sigma^2}} \right) \left(\frac{2 \kappa \cosh(\sigma t) - \sqrt{4 \kappa^2 \cosh^2(\sigma t) - \sigma^2}}{\sigma} \right)^n.
\end{equation*} 
This completes the proof.	
\end{proof}
\begin{figure}[htbp]
\begin{center}
\includegraphics[width=0.45\textwidth]{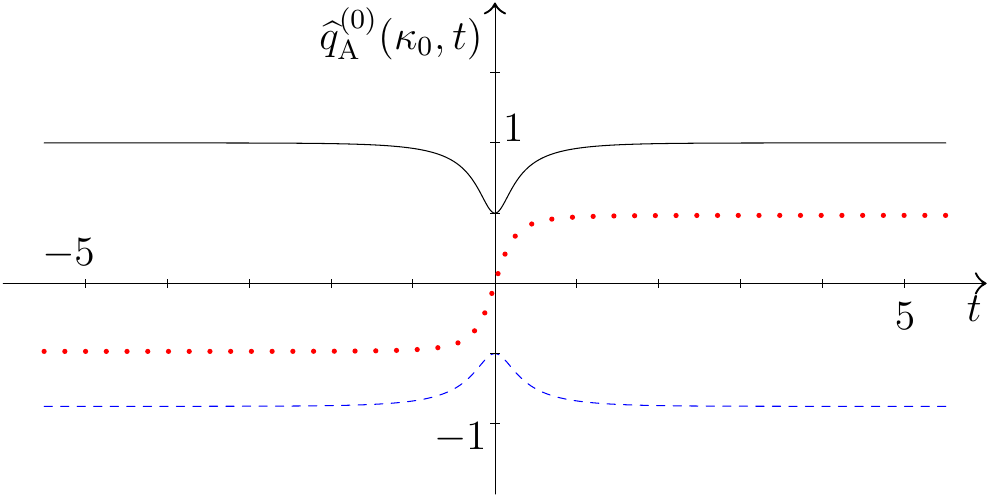} 		\hspace*{1cm}
\includegraphics[width=0.45\textwidth]{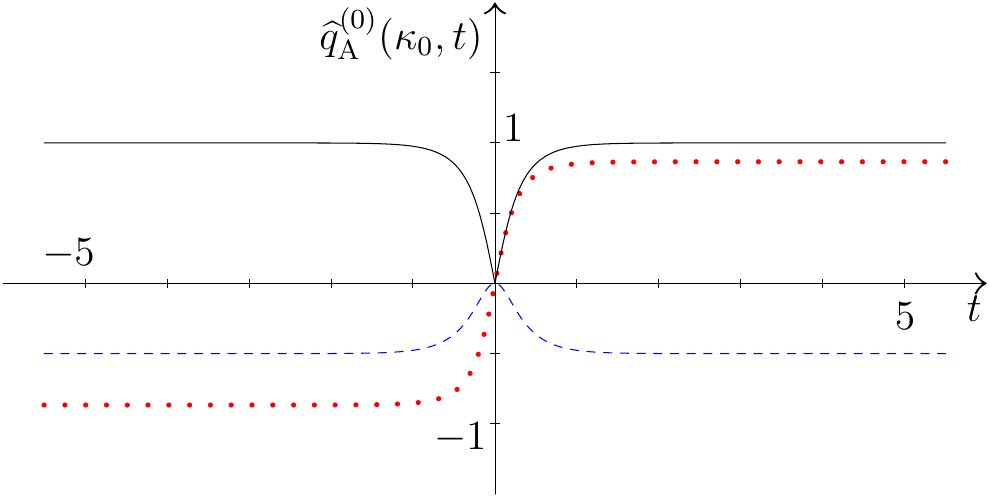} \\ 	\vspace*{0.5cm} 
\includegraphics[width=0.45\textwidth]{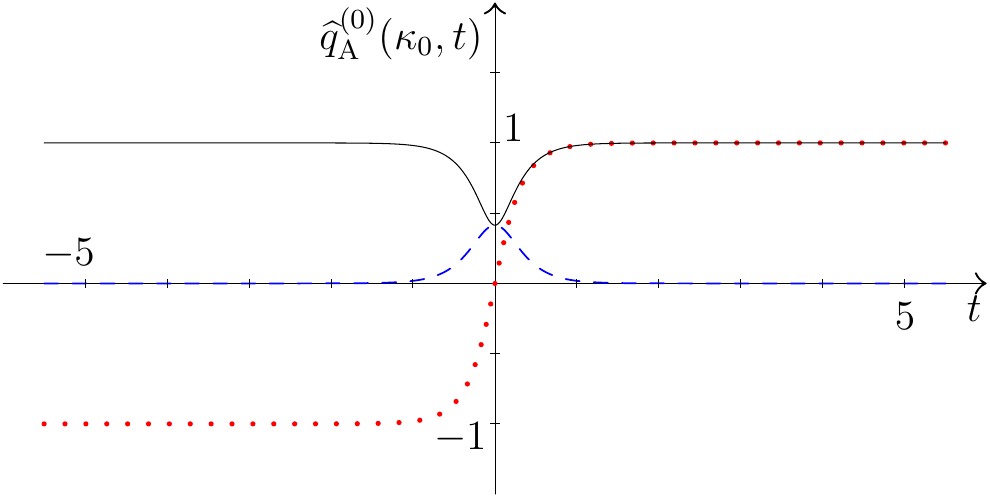} 		\hspace*{1cm}
\includegraphics[width=0.45\textwidth]{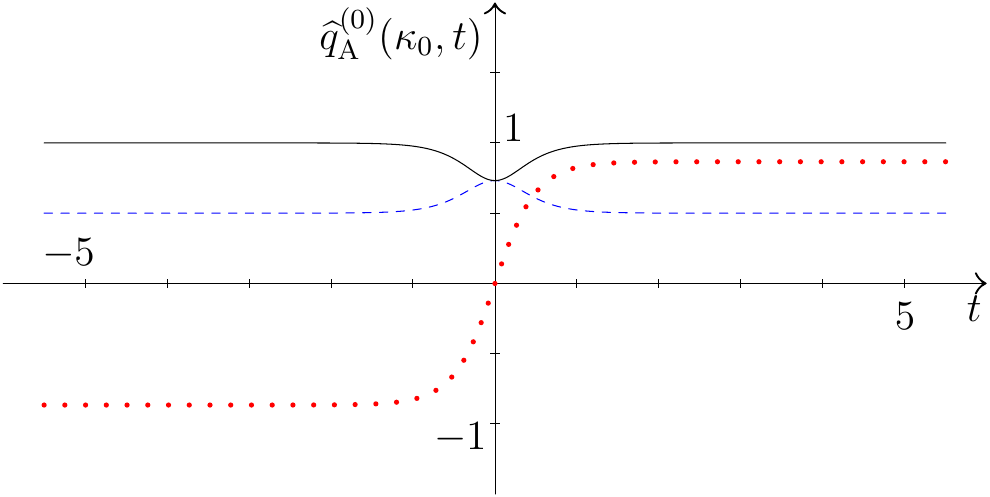}
\end{center}
\caption{The spatial Fourier spectrum of the Akhmediev breather $\widehat{q}_A^{(0)}(\kappa_0,t)$ as a function of time $t$ for $\kappa_0 = 0.5$ (top left), $\kappa_0 = 1$ (top right), $\kappa_0 = \sqrt{2}$ (bottom left), and $\kappa_0 = \sqrt{3}$ (bottom right). The real and imaginary parts are given by dashed blue and solid red plots, respectively. The solid black curve is the modulus spectrum amplitude.}	\label{fou-akh-wanu-in-time}
\end{figure}
Figure~\ref{fou-akh-wanu-in-time} displays the spatial Fourier spectrum of the Akhmediev breather for $n = 0$ as a function of time $t$ for various values of the modulation wavenumber $\kappa = \kappa_0$, i.e., $\widehat{q}_A^{(0)}(\kappa_0,t)$. The real and imaginary parts of $\widehat{q}_A^{(0)}$ are depicted by the dashed blue and dotted red curves, respectively, while the solid black curve represents the modulus of the amplitude spectrum evolution. For all values of $\kappa$, the spectrum modulus reaches a minimum value during $t = 0$, and for an increasing value of $\kappa$, these minimum values of the modulus are decreasing until they vanish at $\kappa_0 = 1$ and then increase again towards one as $\kappa_0 \to 2$. For $t \to \pm \infty$, the amplitude modulus also goes to one.

\begin{figure}[htbp]
\begin{center}
\includegraphics[width=0.45\textwidth]{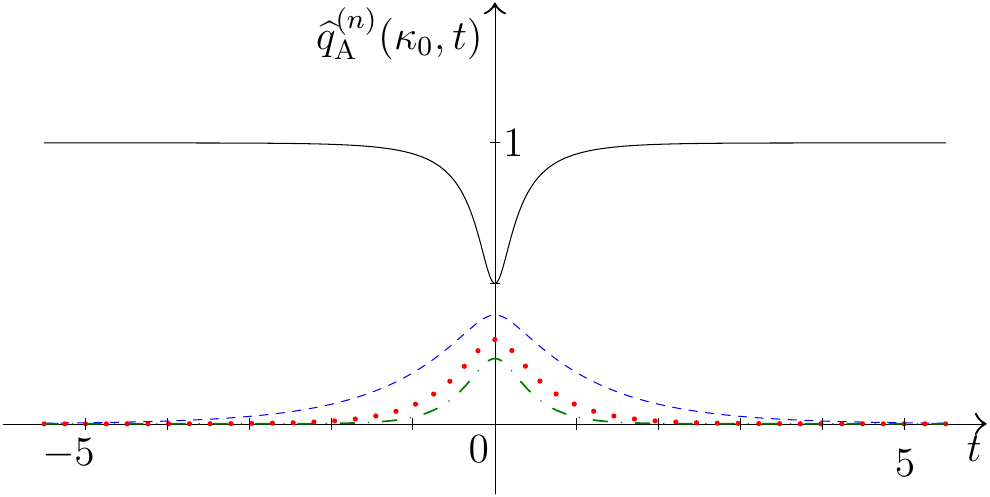} 		\hspace*{1cm}
\includegraphics[width=0.45\textwidth]{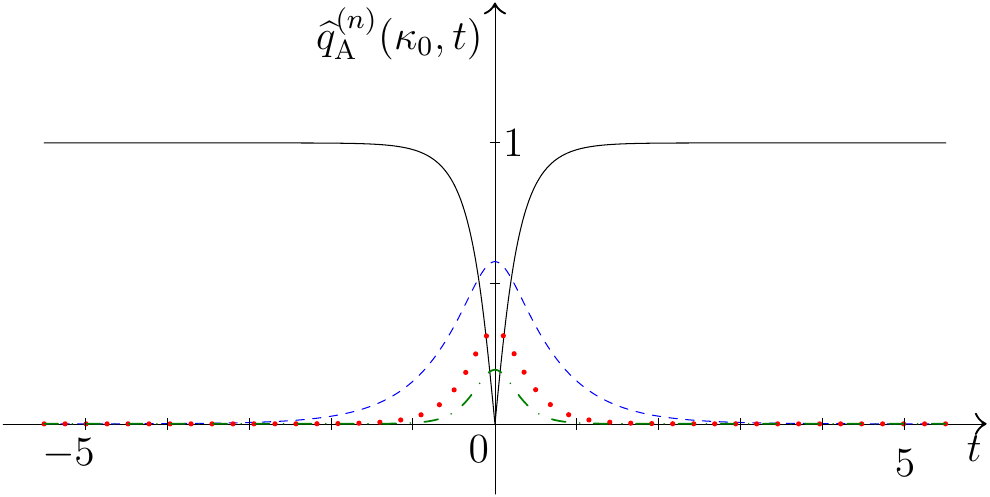} \\ 	\vspace*{0.5cm} 
\includegraphics[width=0.45\textwidth]{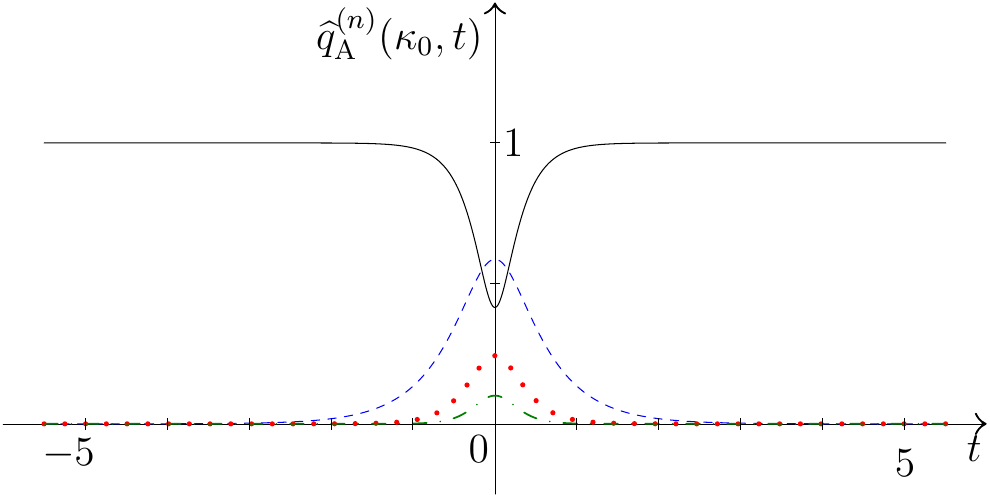} 		\hspace*{1cm}
\includegraphics[width=0.45\textwidth]{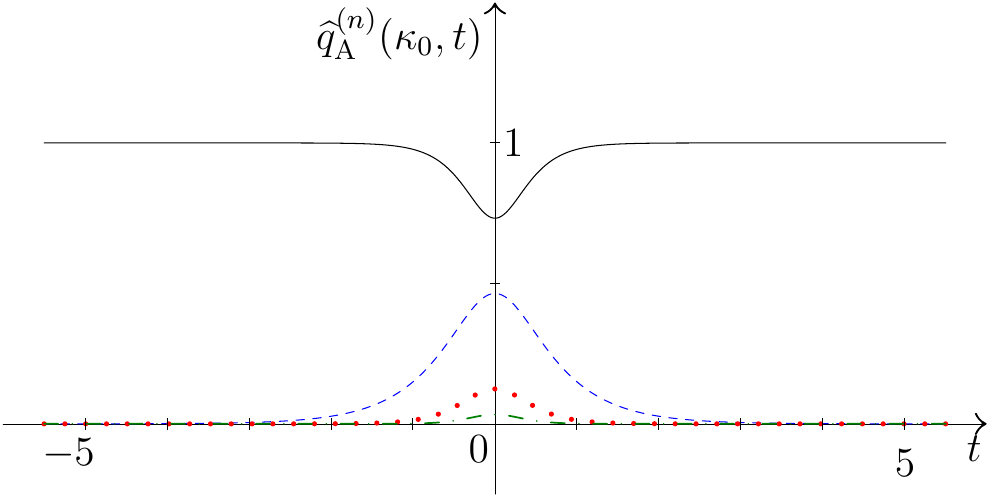}
\end{center}
\caption{The modulus of spatial Fourier amplitude spectrum of the Akhmediev breather $\left|\widehat{q}_A^{(n)}(\kappa_0,t) \right|$ as a function of time $t$ for $\kappa_0 = 0.5$ (top left), $\kappa_0 = 1$ (top right), $\kappa_0 = \sqrt{2}$ (bottom left), and $\kappa_0 = \sqrt{3}$ (bottom right) for $n = 0$ (solid black), $n = 1$ (dashed blue), $n = 2$ (dotted red), and $n = 3$ (dash-dotted green).}	\label{fou-akh-modu-kappa-in-time}
\end{figure}
Similar to Figure~\ref{fou-akh-wanu-in-time}, Figure~\ref{fou-akh-modu-kappa-in-time} displays the moduli of amplitude spectra for several values of~$n$ and~$\kappa$. The solid black curves correspond to the moduli for the main/central wavenumber when $n = 0$, the dashed blue curves correspond to the first pair of sidebands ($n = 1$), the dotted red curves correspond to the second pair of sidebands ($n = 2$), and the dash-dotted green curves correspond to the third pair of sidebands ($n = 3$). While the moduli for the central wavenumber reach a local minimum when $t = 0$, all the sidebands moduli reach a local maximum when $t = 0$. The further the sideband pairs from the main wavenumber, the smaller they attain the maximum value. For $t \to \pm \infty$, all sideband moduli tend to zero while the central sideband goes to one.

\begin{figure}[htbp]
\begin{center}
\includegraphics[width=0.45\textwidth]{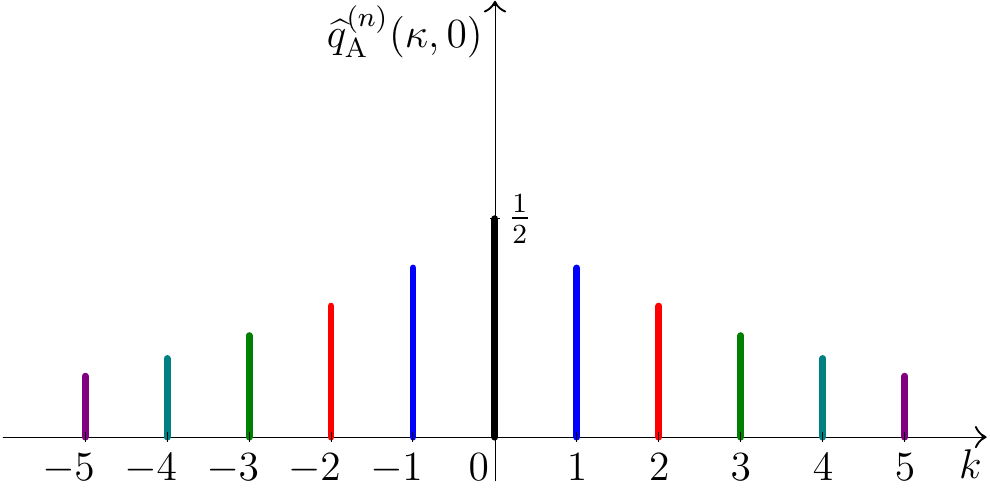} 		\hspace*{1cm}
\includegraphics[width=0.45\textwidth]{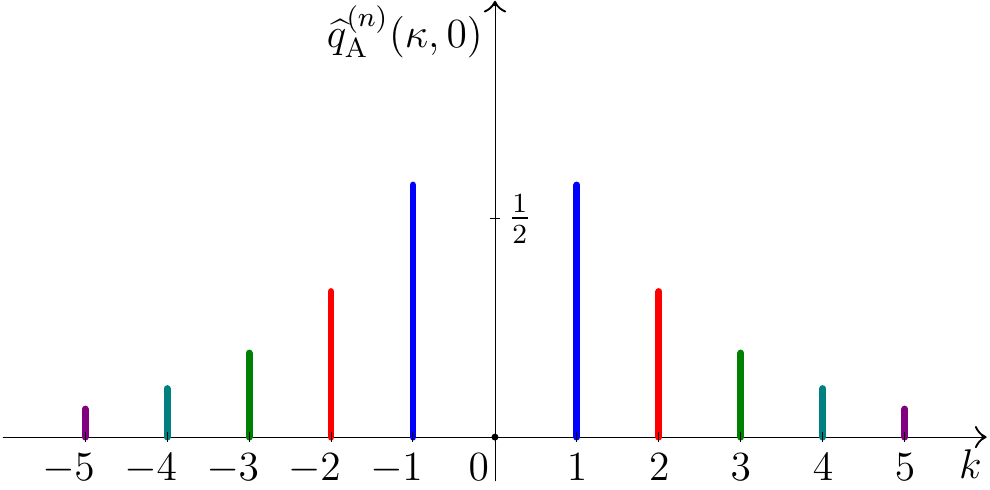} \\ 	\vspace*{0.5cm} 
\includegraphics[width=0.45\textwidth]{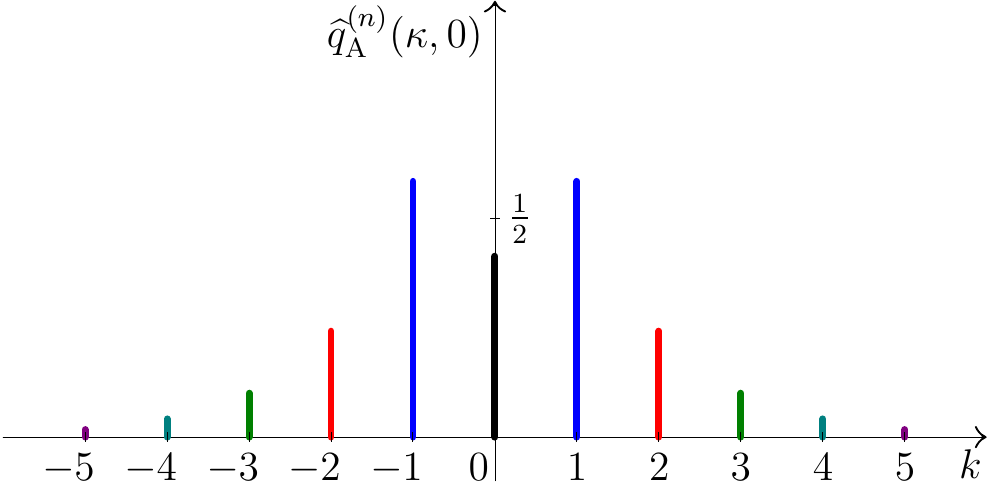} 		\hspace*{1cm}
\includegraphics[width=0.45\textwidth]{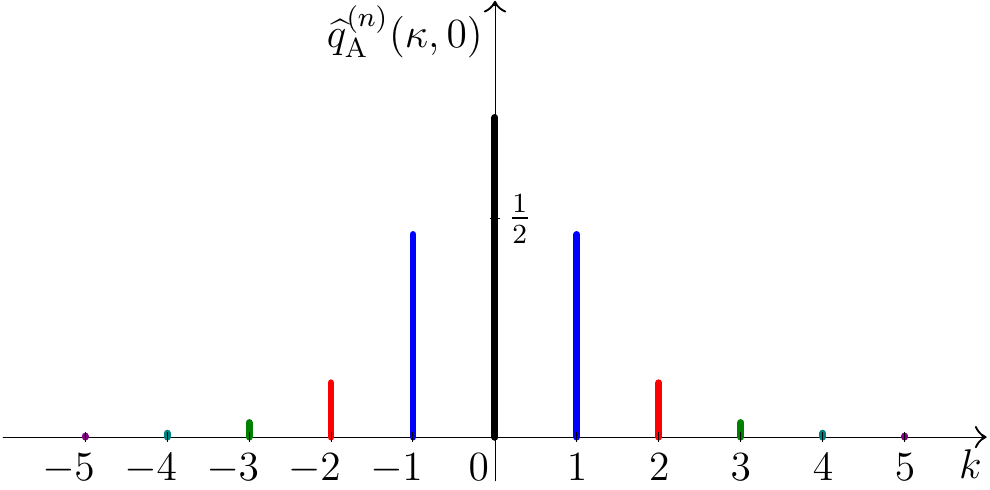}
\end{center}
\caption{The modulus of spatial Fourier amplitude spectrum of the Akhmediev breather $\left|\widehat{q}_A^{(n)}(\kappa,t) \right|$ when $t = 0$ as a function of the wavenumber $k$ for $\kappa = 0.5$ (top left), $\kappa = 1$ (top right), $\kappa = \sqrt{2}$ (bottom left), and $\kappa = \sqrt{3}$ (bottom right) for $n = 0$ (black), $n = 1$ (blue), $n = 2$ (red), $n = 3$ (green), $n = 4$ (cyan), and $n = 5$ (magenta).}		\label{fou-akh-time-in-kappa}
\end{figure}
Figure~\ref{fou-akh-time-in-kappa} displays the moduli of the spatial Fourier amplitude spectrum of the Akhmediev breather when $t = 0$ as a function of wavenumber for several values of $\kappa$. Each panel shows only up to five pairs of sideband wavenumber as indicated by different colors, the first pair is blue, the second one is red, the third one is green, the fourth and fifth pairs are cyan and magenta, respectively. While the moduli of the sidebands are getting smaller as they progress toward higher-order pairs, the modulus corresponds to the main wavenumber does not always occupy the highest position. Depending on the sideband wavenumber $\kappa$, it might be taller than the first pairs of sidebands (for $\kappa_0 = \frac{1}{2}$ and $\kappa_0 = \sqrt{3}$) or shorter than the first pairs of sidebands (for $\kappa_0 = 1$ and $\kappa_0 = \sqrt{2}$). 

Different from the previous two cases where the spectra are continuous in $k$ (the KM breather and Peregrine soliton), the spectrum for this breather is discrete due to its periodic wave profile. While the initial modulated waves (when $t < 0$ but not $t \to -\infty$) exhibit the spectrum of the main/central frequency accompanied by a pair of sidebands, more pairs of sidebands were generated as the waves evolves in time. As time increases (when $t > 0$), higher-order sidebands in the spectrum disappear and return to the state of the initial condition when $t < 0$ with the central frequency with a pair of sidebands. This behavior suggests that the wave energy is distributed from the main wavenumber to the sideband wavenumbers and is recollected back during the evolution. Interestingly, for a particular value of $\kappa$, i.e., $\kappa_0 = 1$, all energy from the central wavenumber is transferred completely to its sidebands while for other values of $\kappa$, only partial energy is distributed from the central wavenumber to its sidebands.

% Section 3 Conclusion
\section{Conclusion}	\label{conclude}

In this article, we have considered the Fourier spectrum for rogue wave prototypes from the NLS equation. Also known as the soliton waves on a non-vanishing background, all three breathers are related with a complex-valued parameter. While the Peregrine soliton was discovered theoretically after the discovery of the KM breather and before the Akhmediev breather, it serves as the limiting case for both the KM and Akhmediev breathers. We need to perform integration in the complex plane when deriving the analytical expressions for the spatial Fourier spectrum of the breathers. Although the computations performed in this article are relatively straightforward, it turns out that the derivation is technically challenging nonetheless. 

Since both the KM and Peregrine breathers have infinite periodicity in the spatial domain, their spectra are continuous functions in both wavenumber and temporal domains. The spectrum modulus for the former in the temporal domain shows a pattern of periodicity while for the latter, the period is infinity. In the wavenumber domain, the corresponding spectrum moduli for both breathers exhibit infinite periodicity. For the Akhmediev breather, since it is periodic in space, its spatial spectrum remains continuous in the time domain but transforms into a discrete-type in the wavenumber domain. Remarkably, higher-order sidebands were generated as the initial sideband pairs evolve in time and then returned to an initial state. 

\subsection*{Dedication}
The author would like to dedicate this article to his late father Zakaria Karjanto (Khouw Kim Soey, 許金瑞) who introduced and taught him the alphabet, numbers, and the calendar in his early childhood. Karjanto senior was born in Tasikmalaya, West Java, Japanese-occupied Dutch~East~Indies on 1~January~1944 (Saturday~Pahing) and died in Bandung, West Java, Indonesia on 18~April~2021 (Sunday~Wage).

\subsection*{Acknowledgment}
The author gratefully acknowledges E. (Brenny) van Groesen for the long-lasting guidance and concerted cultivation in thinking and growing up mathematically.

\subsection*{Conflict of Interest}
The author declares no conflict of interest.

% References

\end{document}